\documentclass[a4paper,12pt]{article}
\usepackage[margin=2cm]{geometry}

\usepackage[utf8]{inputenc}

\usepackage{float}
\usepackage{amssymb}
\usepackage{amsmath}
\usepackage{amsfonts}
\usepackage{graphicx}
\usepackage{algorithm}
\usepackage{algorithmic}
\usepackage[dvipsnames]{xcolor}
\usepackage[caption = false]{subfig}
\usepackage{tikz-cd}

\renewcommand{\bar}{\overline}

\newcommand{\bey}{\begin{eqnarray}}
\newcommand{\pslash}{\not{\hbox{\kern-2.3pt $p$}}}
\newcommand{\pdslash}{\not{\hbox{\kern-2pt $\partial$}}}
\newcommand{\eey}{\end{eqnarray}}
\newtheorem{theorem}{Theorem}
\newtheorem{proposition}{Proposition}
\newenvironment{proof}[1][Proof]{\noindent\textbf{#1.} }{\ \rule{0.5em}{0.5em}}
\newtheorem{lemma}{Lemma}
\newtheorem{definition}{Definition}

\newcommand{\ket}[1]{\vert #1 \rangle}
\newcommand{\bra}[1]{\langle #1 \vert}
\newcommand{\innerproduct}[2]{\langle #1 \vert #2 \rangle}

\newcommand{\edittext}{\color{black}}

\begin{document}
    
\begin{titlepage}
\vskip 2cm
\begin{center}
{\Large Quantum classification and search algorithms using spinorial representations.

\vskip 10pt
{ Lauro Mascarenhas $^{\dagger}$ \footnote{{\tt  laurodejesusmascarenhas@gmail.com}}, Vinicius N. A. Lula-Rocha $^{\ddagger}$\footnote{{\tt viniciusnonato@gmail.com}} \\ 
Marco A. S. Trindade $^{\dagger}$ \footnote{{\tt matrindade@uneb.br}}} \\}

\vskip 5pt
{\sl $^{\dagger}$ Colegiado de Física, Departamento de Ciências Exatas e da Terra, \\
Universidade do Estado da Bahia\\
Rua Silveira Martins, Cabula, 41150000, Salvador, Bahia, Brazil\\} 
\vskip 5pt
{\sl $^{\ddagger}$ Departamento de Física, Instituto de Ciências Naturais \\
Universidade Federal de Lavras\\
37200-900 Lavras, Minas Gerais, Brazil\\}

\vskip 5pt
\end{center}

\begin{abstract}
We propose an algebraic formulation for two distinct quantum algorithms: a quantum classification algorithm and a quantum search algorithm with a non-uniform initial distribution, both based on Clifford algebras and spinorial representations. In the classification algorithm, we exploit properties of spinorial representations to construct orthogonal quantum states associated with different classes, allowing the identification of an item’s class through the evaluation of expectation values of operators derived from the generators of the Clifford algebra. In the quantum search algorithm, we consider a database with prior information in which the oracle is implemented directly using generators of the Clifford algebra, simplifying its realization. The proposed approach provides a unified algebraic description for both algorithms, employing spinorial representations in the construction of quantum states and operators. Computational implementations are presented.
\end{abstract}

\bigskip

{\it Keywords:} Quantum classification algorithm, Quantum classification algorithm, Clifford algebras, spinorial representations.

\end{titlepage}

\newpage
\setcounter{page}{1}

\section{Introduction}

Quantum search algorithms are among the most relevant and well-established algorithms in quantum computation, both from a theoretical and a practical perspective \cite{Grover, Brassard, Nielsen, Scherer}. Beyond their intrinsic significance as paradigmatic examples of quantum speedup in relation with the classical versions, these algorithms frequently appear as essential subroutines in more elaborate quantum protocols, enabling performance improvements in a wide range of computational tasks. In particular, Grover’s search algorithm and its generalizations provide a quadratic speedup for unstructured search problems and constitute  fundamental building blocks for amplitude amplification techniques.

In the context of quantum machine learning, quantum search algorithms have been investigated as a means to accelerate classical learning and optimization procedures \cite{Nielsen, Grover2}. For instance, Grover’s algorithm and its variants have been employed to enhance classical clustering algorithms by quantizing specific subroutines, thereby achieving faster convergence in unsupervised learning tasks \cite{Aimeur}. The central idea underlying these approaches is to obtain speedups by selectively replacing classical search or sampling steps by their quantum counterparts. Moreover, amplitude amplification techniques have also been proposed for the training of quantum perceptrons, leading to improved performance in learning models that rely on iterative optimization procedures \cite{Kapoor, Schuld}. 

In a related line of research, some authors have investigated the use of Clifford algebras as an algebraic framework for the analysis of Grover’s search algorithm \cite{Gregoric, Alves}. In Ref.~\cite{Gregoric}, the authors employ Clifford algebra techniques to derive a generalized version of Grover’s algorithm, which naturally reduces to the standard formulation for a particular choice of the initial state. This algebraic approach provides a compact and geometrically transparent description of the algorithm’s dynamics, allowing the search process to be interpreted in terms of rotations in suitably defined subspaces. From a similar perspective, Ref.~\cite{Alves} demonstrates that the computation of algorithmic complexity can be significantly simplified by recasting the analysis in terms of Clifford algebraic structures. 

Clifford algebras and spinors play a fundamental role in several areas of physics, providing a unifying algebraic framework for the description of fermionic degrees of freedom and symmetry structures \cite{Doran, Lounesto, Hestenes, Vaz}. In relativistic quantum mechanics, for instance, the Dirac matrices furnish a concrete representation of the Clifford algebra $Cl(1,3)$, forming the algebraic backbone of the Dirac equation and the theory of relativistic spin-$\tfrac{1}{2}$ particles. In supersymmetry and in M-theory, Clifford algebras and their associated spinorial representations are ubiquitous, appearing naturally in the construction of supersymmetry algebras, supercharges, and higher-dimensional field theories \cite{Traubenberg, Anastasiou, Toppan, Trindade1}. This is particularly evident in eleven-dimensional supergravity, where the supercharge is represented by Majorana spinors. Beyond high-energy physics, these algebraic structures have also found applications in quantum information science. In the context of quantum neural networks and quantum Boltzmann machines, it has been shown that Clifford algebras can be employed to encode and extract geometric information inherent in the data, providing a useful tool for the development of quantum learning models \cite{Trindade2, Trindade3}.

In this work, we explore the use of Clifford algebras as a mathematical framework for the construction of a quantum classification algorithm. Our approach relies on spinorial representations, which allow the systematic construction of orthogonal quantum states and operators directly from generators of the Clifford algebra. The formulation developed in this context can also be applied to quantum search problems in which the initial state of the system is not fixed a priori, but instead is constructed from spinorial representations of the group $\mathrm{Spin}(2n)$. This setting is particularly relevant in light of Ref.~\cite{Biham}, where a version of Grover’s search algorithm with arbitrary initial amplitudes was derived. In such scenarios, the initial distribution of marked and unmarked states may corresponds to the output of a preceding quantum process, rather than to a uniform superposition. Consequently, the system may start its evolution in a more general quantum state, making it necessary to consider search procedures that go beyond the standard assumptions of the original Grover algorithm.

Our work is organized as follows. In Section~2, we introduce the algebraic formulation underlying our approach, which is based on Clifford algebras and spinorial representations, and present the mathematical results that support the construction of the proposed quantum states and operators. Section~3 is devoted to the description of the quantum algorithms introduced in this work, including both the classification algorithm and the quantum search algorithm with prior information, together with illustrative examples and discussions of their operational principles. In Section~4, we present our conclusions and discuss perspectives and directions for future research. Finally, in the Appendix, we review several foundational results on representations of Clifford algebras and spinorial representations.
\section{Algebraic formulation}

In this section, we present the mathematical results that provide the basis for the construction of our algorithms. Our formulation is based on spinorial representations of $\mathrm{Spin}(2n)$. The following lemma shows how to construct orthogonal quantum  states in the context of representations of Clifford algebras.

\begin{lemma} \label{lem1}
Given a representation of a Clifford algebra $Cl(2n)$ defined by:
\begin{equation} \label{eq_def_gerador_clifford}
\begin{matrix}
\Gamma_j = \sigma_1^{\otimes (j-1)} \otimes \sigma_2 \otimes I^{\otimes (n - j )} \\
\Gamma_{n+j} = \sigma_1^{\otimes (j-1)} \otimes \sigma_3 \otimes I^{\otimes (n - j )}
\end{matrix}
\end{equation}
where $\sigma_1, \sigma_2, \sigma_3$ are Pauli matrices, $I$ is the identity matrix and $1 \leq j \leq n$ \cite{Gilbert}, there is an eigenvector $\ket{\Gamma_j}$ of $\Gamma_{j}$ orthogonal to the vector $\Gamma_i \Gamma_j \ket{\Gamma_j}$, with $i\neq j$ . The eigenvectors of $\Gamma_j$ and $\Gamma_{n+j}$ are given explicitly by
{\edittext
\begin{equation} \label{2}
\left( \frac{1}{\sqrt{2}} \right)^{j-1}
\begin{pmatrix}
1 \\ \pm 1
\end{pmatrix}^{\otimes j-1}
\otimes
\begin{pmatrix}
1 \\ \pm i
\end{pmatrix}
\otimes
\begin{pmatrix}1 \\ 0\end{pmatrix}^{\otimes n-j}
\end{equation}
and
\begin{equation} \label{3}
\left( \frac{1}{\sqrt{2}} \right)^{j-1}
\begin{pmatrix}
1 \\ \pm 1
\end{pmatrix}^{\otimes j-1}
\otimes
\begin{pmatrix}1 \\ 0\end{pmatrix}^{\otimes n-j+1}
\end{equation}
}
respectively, with eigenvalues $\pm 1$.
\end{lemma}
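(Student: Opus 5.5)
The plan is to separate the statement into two independent parts: the explicit eigenvectors, which follow from the tensor product structure, and the orthogonality claim, which follows purely from the Clifford relations. I will first check that the matrices in (\ref{eq_def_gerador_clifford}) really generate $Cl(2n)$. Each $\Gamma_k$ is a tensor product of Hermitian involutions, so $\Gamma_k^\dagger=\Gamma_k$ and $\Gamma_k^2=I$.

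For anticommutation, take $k\neq l$ and compare the two tensor factors slot by slot. If both indices refer to the same slot $j$ (the pair $\Gamma_j,\Gamma_{n+j}$), the factors at slot $j$ are $\sigma_2$ and $\sigma_3$, which anticommute. All other slots hold $\sigma_1$ against $\sigma_1$ or $I$ against $I$, which commute. If the slots differ, say $j<j'$, then at slot $j$ one operator carries $\sigma_2$ or $\sigma_3$ and the other carries $\sigma_1$, giving exactly one anticommuting pair. Every other slot holds $\sigma_1$ against $\sigma_1$, or $I$ against something, so those factors commute. In every case there is exactly one sign flip, hence $\{\Gamma_k,\Gamma_l\}=2\delta_{kl}I$. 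This bookkeeping is the only place needing care, and I expect it to be the main (though modest) obstacle.

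For the explicit eigenvectors I will use that $(A_1\otimes\cdots\otimes A_n)(v_1\otimes\cdots\otimes v_n)=(A_1v_1)\otimes\cdots\otimes(A_nv_n)$. A product of factorwise eigenvectors is therefore an eigenvector, with eigenvalue equal to the product of the factor eigenvalues. I then plug in the following factor eigenvectors:
\begin{itemize}
\item $(1,\pm1)^T/\sqrt2$ for $\sigma_1$, with eigenvalue $\pm1$;
\item $(1,\pm i)^T$ for $\sigma_2$, with eigenvalue $\pm1$ (it acquires a $1/\sqrt2$ if normalized);
\item $(1,0)^T$ for $\sigma_3$, with eigenvalue $+1$;
\item $(1,0)^T$ for $I$, with eigenvalue $1$.
\end{itemize}
This reproduces (\ref{2}) and (\ref{3}), and the overall eigenvalue is the product of the chosen signs, hence $\pm1$.

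For orthogonality, let $\Gamma_j\ket{\Gamma_j}=\lambda\ket{\Gamma_j}$ with $\lambda=\pm1$, and let $i\neq j$. Anticommutation gives
\[
\Gamma_j\left(\Gamma_i\Gamma_j\ket{\Gamma_j}\right)=-\Gamma_i\Gamma_j\Gamma_j\ket{\Gamma_j}=-\lambda\,\Gamma_i\Gamma_j\ket{\Gamma_j}.
\]
So $\Gamma_i\Gamma_j\ket{\Gamma_j}$, which is nonzero because $\Gamma_i\Gamma_j$ is unitary, is an eigenvector of $\Gamma_j$ with eigenvalue $-\lambda\neq\lambda$. Since $\Gamma_j$ is Hermitian, eigenvectors with distinct eigenvalues are orthogonal, which gives $\bra{\Gamma_j}\Gamma_i\Gamma_j\ket{\Gamma_j}=0$.

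The orthogonality argument works for every eigenvector of $\Gamma_j$ and every $i\neq j$ in $\{1,\dots,2n\}$. It therefore holds in particular for the explicit vectors just exhibited, and it applies equally to $\Gamma_{n+j}$.
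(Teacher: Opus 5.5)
Your proposal is correct and follows essentially the paper's route. Anticommutation shows that $\Gamma_i\Gamma_j\ket{\Gamma_j}$ is an eigenvector of $\Gamma_j$ with eigenvalue $-\lambda$, Hermiticity then forces orthogonality, and the explicit eigenvectors come from tensoring Pauli eigenvectors, with your added checks (the Clifford relations for this $\sigma_1$-string representation, and $\Gamma_i\Gamma_j\ket{\Gamma_j}\neq 0$ via unitarity) filling in points the paper leaves implicit.
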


\begin{proof}
Let $\lambda_j \neq 0$ be an eigenvalue associated with the operator $\Gamma_j$ acting on an eigenvector $\ket{\Gamma_j}$. Then, for $i \neq j$, we have
\begin{equation*}
\begin{split}
\Gamma_i \Gamma_j \ket{\Gamma_j}
&= -\Gamma_j \Gamma_i \ket{\Gamma_j} \\
&= -\Gamma_j \Gamma_i \left( \frac{1}{\lambda_j} \Gamma_j \ket{\Gamma_j} \right) \\
&= -\frac{1}{\lambda_j} \Gamma_j \Gamma_i \Gamma_j \ket{\Gamma_j}.
\end{split}
\end{equation*}
Consequently,
\begin{equation}
\Gamma_j \Gamma_i \Gamma_j \ket{\Gamma_j}
= -\lambda_j \Gamma_i \Gamma_j \ket{\Gamma_j},
\end{equation}
that is, $\Gamma_i \Gamma_j \ket{\Gamma_j}$ is an eigenvector of $\Gamma_j$ with eigenvalue $-\lambda_j$.

Let $\ket{v_1}$ and $\ket{v_2}$ be two eigenvectors of $\Gamma$ (an arbitrary generator of Clifford algebra $Cl(2n)$) with distinct eigenvalues. Then
\begin{equation*}
\begin{split}
\bra{v_2} \Gamma \ket{v_1}
&= \bra{v_2} \Gamma^\dagger \ket{v_1} \\
&= (\bra{v_2} \Gamma) \ket{v_1}, \\
\lambda_1 \innerproduct{v_2}{v_1}
&= \lambda_2 \innerproduct{v_2}{v_1},
\end{split}
\end{equation*}
since the operators $\Gamma$ are Hermitian. It follows that
\begin{equation}
\lambda_1 \neq \lambda_2 \Rightarrow \innerproduct{v_2}{v_1} = 0.
\end{equation}

{\edittext
Therefore, the eigenvectors are orthogonal. The eigenvectors $\ket{\Gamma_j}$ and $\ket{\Gamma_{n+j}}$ are eigenvectors of $\Gamma_j$ and $\Gamma_{n+j}$, respectively, and are constructed from eigenvectors of the Pauli matrices. Hence, the eigenvectors can be written as
\begin{equation}
\ket{\Gamma_j}
=
\ket{\sigma_1}^{\otimes j-1}
\otimes
\ket{\sigma_2}
\otimes
\ket{\mathbb{I}}^{\otimes n-j+1},
\end{equation}
\begin{equation}
\ket{\Gamma_{n+j}}
=
\ket{\sigma_1}^{\otimes j-1}
\otimes
\ket{\sigma_3}
\otimes
\ket{\mathbb{I}}^{\otimes n-j+1}.
\end{equation}
Thus,
\begin{equation}
\Gamma_j \ket{\Gamma_j}
=
\lambda_{\sigma_1} \cdots \lambda_{\sigma_1}
\lambda_{\sigma_2}
\lambda_{\mathbb{I}} \cdots \lambda_{\mathbb{I}}
\ket{\Gamma_j}.
\end{equation}
where $\lambda_{\sigma_1}$ and $\lambda_{\mathbb{I}}=1$ are the eigenvalues of Pauli matrices and of the identity matrix, respectively. It is straightforward to verify that the eigenvalues are $\pm 1$. Indeed,
\begin{eqnarray}
\Gamma_j \ket{\Gamma_j} &=& \lambda_j \ket{\Gamma_j}, \nonumber \\
\Gamma_j^2 \ket{\Gamma_j} &=& \lambda_j^2 \ket{\Gamma_j}, \nonumber \\
\ket{\Gamma_j} &=& \lambda_j^2 \ket{\Gamma_j}. \nonumber
\end{eqnarray}
This implies $\lambda_j = \pm 1$, which provides the justification for eqs. (\ref{2}) and (\ref{3}). In summary,
\begin{equation}
\bra{\Gamma_j} \Gamma_i \Gamma_j \ket{\Gamma_j} = \delta_{ij}.
\end{equation}
}
\end{proof}

Next, we introduce the concept of a class, which is fundamental for the construction of our first algorithm.

\begin{definition}
Let $\mathcal H$ be a Hilbert space and let $\{O_k\}_{k=1}^{2n}$ be a set of linear operators defined on
$\mathcal H$. We define the class type-I, $\mathcal C_i$, as the set of states
$\vert \psi\rangle \in \mathcal H$ such that there exists a nonzero constant $c_i$ satisfying
\begin{eqnarray}
\langle \psi \vert O_k \vert \psi \rangle
=
\delta_{ik}\, c_i,
\qquad k=1,\dots,2n,
\end{eqnarray}
where $\delta_{ik}$ denotes the Kronecker delta. Equivalently,
\begin{eqnarray}
\vert \psi\rangle \in \mathcal C_i
\iff
\begin{cases}
\langle O_i \rangle_\psi = c_i \neq 0,\\
\langle O_k \rangle_\psi = 0,
\quad k\neq i.
\end{cases}
\end{eqnarray}
\end{definition}

\begin{theorem}
There are states $\vert \psi \rangle$ and operators $\Gamma_{i}$ ( generators of Cl(2n)) such that the state $\vert \psi \rangle \in C_i$.
\end{theorem}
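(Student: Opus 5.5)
The plan is to take the operators $O_k$ in the definition of class type-I to be the generators themselves, $O_k=\Gamma_k$ for $k=1,\dots,2n$, as given in eq.~(\ref{eq_def_gerador_clifford}). For a fixed index $i$, I will take $\vert\psi\rangle$ to be a normalized eigenvector $\ket{\Gamma_i}$ of $\Gamma_i$. Then I will show that $\langle\Gamma_k\rangle_\psi=\delta_{ik}c_i$ with $c_i=\lambda_i=\pm1$. That is exactly membership $\vert\psi\rangle\in\mathcal C_i$.

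First, existence of the eigenvector. Each $\Gamma_i$ is a tensor product of Pauli matrices and identities, so it is Hermitian and satisfies $\Gamma_i^2=I$. Hence it is diagonalizable with eigenvalues $\lambda_i=\pm1$, as already shown in Lemma~\ref{lem1}; explicit eigenvectors are listed in eqs.~(\ref{2}) and (\ref{3}). Normalizing, the diagonal term is immediate: $\bra{\psi}\Gamma_i\ket{\psi}=\lambda_i\innerproduct{\psi}{\psi}=\lambda_i\neq0$. So $c_i=\lambda_i$.

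Second, the vanishing of the off-diagonal expectation values. For $k\neq i$ I will use the anticommutation relation $\Gamma_i\Gamma_k=-\Gamma_k\Gamma_i$ together with Hermiticity of $\Gamma_i$, which gives $\bra{\psi}\Gamma_i=\lambda_i\bra{\psi}$. Then
\begin{equation*}
\lambda_i\bra{\psi}\Gamma_k\ket{\psi}=\bra{\psi}\Gamma_i\Gamma_k\ket{\psi}=-\bra{\psi}\Gamma_k\Gamma_i\ket{\psi}=-\lambda_i\bra{\psi}\Gamma_k\ket{\psi}.
\end{equation*}
Since $\lambda_i\neq0$, this forces $\bra{\psi}\Gamma_k\ket{\psi}=0$. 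Equivalently, in the language of Lemma~\ref{lem1}, $\Gamma_k\ket{\psi}$ is (up to the factor $\lambda_i$) the vector $\Gamma_k\Gamma_i\ket{\Gamma_i}$, which is an eigenvector of $\Gamma_i$ with eigenvalue $-\lambda_i$. It is therefore orthogonal to $\ket{\psi}$ by the orthogonality part of that lemma.

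Combining the two steps gives $\bra{\psi}\Gamma_k\ket{\psi}=\delta_{ik}\lambda_i$ for all $k=1,\dots,2n$, so $\ket{\Gamma_i}\in\mathcal C_i$. Since $i$ is arbitrary, every class is nonempty.

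The only delicate point is the step $\bra{\psi}\Gamma_i=\lambda_i\bra{\psi}$, which needs $\Gamma_i$ Hermitian and $\lambda_i$ real. I would verify this directly from eq.~(\ref{eq_def_gerador_clifford}), since each Pauli matrix is Hermitian and tensor products of Hermitian matrices are Hermitian. I would also check the anticommutation $\{\Gamma_i,\Gamma_k\}=2\delta_{ik}I$ from the Pauli relations, rather than rely solely on citing \cite{Gilbert}. As an optional sanity check, one could evaluate these expectation values on the explicit vectors of eqs.~(\ref{2})--(\ref{3}), but the argument above does not depend on that.
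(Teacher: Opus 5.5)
Your proof is correct, but it takes a different and more direct route than the paper's.

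You use a bare normalized eigenvector $\ket{\Gamma_i}$ of $\Gamma_i$ itself. Then $c_i=\lambda_i=\pm1$, and the off-diagonal expectation values vanish by the one-line anticommutation argument.

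The paper instead takes the rotated spinor $\cos\theta_{ij}\ket{\Gamma_j}+\sin\theta_{ij}\Gamma_i\ket{\Gamma_j}$. This comes from applying $\exp(\theta_{ij}\Gamma_i\Gamma_j)\in\mathrm{Spin}(2n)$ to an eigenvector $\ket{\Gamma_j}$ of a \emph{different} generator $\Gamma_j$. The paper expands $\bra{\psi}\Gamma_k\ket{\psi}$ into four terms and asserts $c_i=\sin 2\theta_{ij}$, with all other expectations zero. What that route is meant to buy is a direct link between the class and the $\theta$-dependent inputs of Algorithm~1, with a tunable constant.

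However, carrying out that expansion with exactly the facts you use, the $k=j$ terms do not cancel: $\cos^2\theta_{ij}\bra{\Gamma_j}\Gamma_j\ket{\Gamma_j}+\sin^2\theta_{ij}\bra{\Gamma_j}\Gamma_i\Gamma_j\Gamma_i\ket{\Gamma_j}=\lambda_j\cos 2\theta_{ij}$. So in fact $\bra{\psi}\Gamma_k\ket{\psi}=\sin 2\theta_{ij}\,\delta_{ik}+\lambda_j\cos 2\theta_{ij}\,\delta_{jk}$. This agrees with the $\cos^2\theta-\sin^2\theta$ the paper itself obtains in the next theorem, and with $\langle\Gamma_j\rangle=\cos 2\theta$ in Proposition~\ref{sample}.

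Hence the paper's state lies in $\mathcal C_i$ only when $\cos 2\theta_{ij}=0$. At $\theta_{ij}=\pm\pi/4$ it equals $(\ket{\Gamma_j}\pm\Gamma_i\ket{\Gamma_j})/\sqrt{2}$, which is exactly a $\pm1$ eigenvector of $\Gamma_i$, i.e.\ your state. Your choice is therefore not only shorter but also sidesteps that slip. The only thing it forgoes is the one-parameter family, which generically is not in $\mathcal C_i$ anyway.
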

\begin{proof}
Let $|\psi \rangle$ be the state obtained of $R_{ij} \in Spin(2n)$ as
\begin{eqnarray}
|\psi \rangle &=& R_{ij}|\psi \rangle \nonumber \\
&=&\exp\left(\theta_{ij}\Gamma_i\Gamma_j\right) \vert \Gamma_j \rangle \nonumber \\
&=&\cos \theta_{ij}\ket{\Gamma_j}+\sin \theta_{ij}\Gamma_{i}\ket{\Gamma_j}
\end{eqnarray}
Therefore
\begin{eqnarray}
\langle \psi \vert \Gamma_k \vert \psi \rangle
&=&
\left(
\cos\theta_{ij}\langle \Gamma_j \vert
+
\sin\theta_{ij}\langle \Gamma_j \vert \Gamma_i
\right)
\Gamma_k
\left(
\cos\theta_{ij}\vert \Gamma_j \rangle
+
\sin\theta_{ij}\Gamma_i \vert \Gamma_j \rangle
\right)
\nonumber\\
&=&
\cos^2\theta_{ij}
\langle \Gamma_j \vert \Gamma_k \vert \Gamma_j \rangle
+
\cos\theta_{ij}\sin\theta_{ij}
\langle \Gamma_j \vert \Gamma_k\Gamma_i \vert \Gamma_j \rangle
\nonumber\\
&+&
\cos\theta_{ij}\sin\theta_{ij}
\langle \Gamma_j \vert \Gamma_i\Gamma_k \vert \Gamma_j \rangle
+
\sin^2\theta_{ij}
\langle \Gamma_j \vert \Gamma_i\Gamma_k\Gamma_i \vert \Gamma_j \rangle \nonumber \\
&=&2 \sin\theta_{ij}\cos\theta_{ij} \delta_{ik},
\end{eqnarray}
using the Lemma \ref{lem1} and $\{\Gamma_i, \Gamma_j\}=2\delta_{ij}$
\end{proof}
\begin{definition}
A class type-II is defined as an $m$-tuple
$(\mathrm{sgn}(M_{j_1}), \ldots, \mathrm{sgn}(M_{j_m}))$, if there are states $\vert \psi_{i_l j_l} \rangle$ and a Hermitian operator $M_{j_k}$ such that
\begin{eqnarray}
\langle M_{j_k} \rangle
=
\langle \psi_{i_1 j_1}, \ldots, \psi_{i_m j_m}
\vert
M_{j_k}
\vert
\psi_{i_1 j_1}, \ldots, \psi_{i_m j_m}
\rangle
\in [-1,1],
\end{eqnarray}
and $\mathrm{sgn}(M_{j_k}) = 1$ if $\langle M_{j_k} \rangle \geq 0$ and
$\mathrm{sgn}(M_{j_k}) = -1$ if $\langle M_{j_k} \rangle < 0$. We define $\vert \psi_{i_1 j_1}, \ldots, \psi_{i_m j_m}
\rangle = \vert \psi_{i_1 j_1} \rangle \otimes \ldots \otimes \vert \psi_{i_m j_m} \rangle$.

\end{definition}

We consider now the tensor product of Clifford algebras $Cl(2n)^{\otimes n}$ and consequently  $\mathrm{Spin}(2n)^{\otimes n}$. 
\begin{theorem}
There is an operator $R_{i_{1},\cdots,i_{m},j_{1},\cdots j_{m}} \in \mathrm{Spin}(2n)^{\otimes n}$ and states
$\vert \psi_{i_{1},j_{1},\cdots,i_{n}j_{n}} \rangle$ given by
\begin{eqnarray}
\vert \psi_{i_{1},j_{1},\cdots,i_{m}j_{m}} \rangle
=
R_{i_{1}j_{1},\cdots,i_{m}j_{m}}
(\vert \Gamma_{j_{1}} \rangle \otimes \cdots \otimes \vert \Gamma_{j_{k}}\rangle \otimes \cdots \otimes \vert \Gamma_{j_{m}} \rangle),
\end{eqnarray}
such that $\operatorname{sgn}(M_{j_{k}})$ defines the elements of a class for
$M_{j_{k}} = I \otimes \cdots \otimes \Gamma_{j_{k}} \otimes \cdots \otimes I$.
\end{theorem}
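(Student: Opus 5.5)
We take $R_{i_1j_1,\cdots,i_mj_m}$ to be the tensor product of single-factor rotors, $R_{i_1j_1}\otimes\cdots\otimes R_{i_mj_m}$, where $R_{i_kj_k}=\exp(\theta_{i_kj_k}\Gamma_{i_k}\Gamma_{j_k})\in\mathrm{Spin}(2n)$ acts on the $k$-th tensor factor (padding with identities if $m<n$). Since each $R_{i_kj_k}$ lies in $\mathrm{Spin}(2n)$, the product lies in $\mathrm{Spin}(2n)^{\otimes n}$. Applied to $\ket{\Gamma_{j_1}}\otimes\cdots\otimes\ket{\Gamma_{j_m}}$, it gives the product state $\ket{\psi_{i_1j_1}}\otimes\cdots\otimes\ket{\psi_{i_mj_m}}$, with each factor
\[
\ket{\psi_{i_kj_k}}=\cos\theta_{i_kj_k}\ket{\Gamma_{j_k}}+\sin\theta_{i_kj_k}\,\Gamma_{i_k}\ket{\Gamma_{j_k}},
\]
exactly as in the proof of the previous theorem. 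Each factor is normalized, because $\ket{\Gamma_{j_k}}$ and $\Gamma_{i_k}\ket{\Gamma_{j_k}}$ are orthogonal unit vectors: Lemma \ref{lem1} gives orthogonality, and $\Gamma_{i_k}$ is unitary.

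Next we compute $\langle M_{j_k}\rangle$ for $M_{j_k}=I\otimes\cdots\otimes\Gamma_{j_k}\otimes\cdots\otimes I$. Because the state is a product, this expectation factorizes. Every factor other than the $k$-th contributes $\innerproduct{\psi_{i_lj_l}}{\psi_{i_lj_l}}=1$, so
\[
\langle M_{j_k}\rangle=\bra{\psi_{i_kj_k}}\Gamma_{j_k}\ket{\psi_{i_kj_k}}.
\]
Expanding as in the previous theorem, with $\Gamma_{j_k}\ket{\Gamma_{j_k}}=\lambda\ket{\Gamma_{j_k}}$ and $\lambda=\pm1$, we get:
\begin{itemize}
\item the $\cos^2$ term equals $\lambda$;
\item the cross terms involve $\bra{\Gamma_{j_k}}\Gamma_{j_k}\Gamma_{i_k}\ket{\Gamma_{j_k}}$, which is proportional to $\innerproduct{\Gamma_{j_k}}{\Gamma_{i_k}\Gamma_{j_k}\Gamma_{j_k}}$ (up to the factor $\lambda$), and this vanishes because $\Gamma_{i_k}\Gamma_{j_k}\ket{\Gamma_{j_k}}$ is an eigenvector of $\Gamma_{j_k}$ with eigenvalue $-\lambda$ (Lemma \ref{lem1});
\item the $\sin^2$ term is $\bra{\Gamma_{j_k}}\Gamma_{i_k}\Gamma_{j_k}\Gamma_{i_k}\ket{\Gamma_{j_k}}=-\lambda$ by anticommutation.
\end{itemize}
Hence
\[
\langle M_{j_k}\rangle=\lambda_{j_k}\cos 2\theta_{i_kj_k}\in[-1,1].
\]
Note the operator here is $\Gamma_{j_k}$ rather than $\Gamma_{i_k}$, which is why the result differs from the previous theorem's $\sin 2\theta$.

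Finally, the sign of each $\langle M_{j_k}\rangle$ can be set independently. Choose $\theta_{i_kj_k}\in[0,\pi/4]$ for sign $+\lambda_{j_k}$ and $\theta_{i_kj_k}\in(\pi/4,\pi/2]$ for sign $-\lambda_{j_k}$, or flip $\lambda_{j_k}$ by picking the other eigenvector in (\ref{2}). Every $m$-tuple $(\mathrm{sgn}(M_{j_1}),\ldots,\mathrm{sgn}(M_{j_m}))\in\{\pm1\}^m$ is then realized, so these signs define the elements of a class type-II.

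The main obstacle is bookkeeping rather than analysis. The statement mixes $m$ and $n$ and does not specify which factor each $\Gamma_{j_k}$ acts on. I would fix the convention that the $k$-th rotor and $M_{j_k}$ both act on the $k$-th factor, and that the Hermiticity of $\Gamma_{j_k}$ makes $M_{j_k}$ Hermitian as the definition requires. I would also treat the boundary case $\cos2\theta=0$, where the sign is $+1$ by convention, separately.
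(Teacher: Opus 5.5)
Your proposal is correct and follows the paper's proof essentially step for step: the paper likewise writes $R$ as a tensor product of single-factor rotors $\exp(\theta_{i_kj_k}\Gamma_{i_k}\Gamma_{j_k})$, applies it to $\ket{\Gamma_{j_1}}\otimes\cdots\otimes\ket{\Gamma_{j_m}}$, and uses the orthogonality of Lemma~\ref{lem1} to obtain $\langle M_{j_k}\rangle=\cos^2\theta_{i_kj_k}-\sin^2\theta_{i_kj_k}\in[-1,1]$. Your additions are refinements, not a different route: you track the eigenvalue $\lambda_{j_k}$ (the paper tacitly takes $+1$; for $\lambda_{j_k}=-1$ the rotor actually puts a minus sign on the $\sin$ term, which is harmless for $\langle M_{j_k}\rangle$), you factorize the expectation explicitly over the product state, and you check that every sign tuple is attainable.
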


\begin{proof}
Define
\begin{eqnarray}
R_{i_{1},\cdots,i_{m},j_{1},\cdots j_{m}}
&=&
\exp\!\Bigl[
\theta_{i_{1}j_{1}}(\Gamma_{i_{1}}\Gamma_{j_{1}} \otimes \cdots \otimes I)
+ \cdots
+ \theta_{i_{k}j_{k}} (I \otimes \cdots \otimes \Gamma_{i_{k}}\Gamma_{j_{k}} \otimes \cdots \otimes I)
\nonumber \\
&+&
\cdots
+ \theta_{i_{m}j_{m}}(I \otimes \cdots \otimes \Gamma_{i_{m}}\Gamma_{j_{m}})
\Bigr].
\end{eqnarray}

We can rewrite $R_{i_{1},\cdots,i_{n},j_{1},\cdots j_{n}}$ as
\begin{eqnarray}
R_{i_{1},\cdots,i_{m},j_{1},\cdots j_{m}}
=
\exp(\theta_{i_{1}j_{1}}\Gamma_{i_{1}}\Gamma_{j_{1}})
\otimes \cdots \otimes
\exp(\theta_{i_{k}j_{k}}\Gamma_{i_{k}}\Gamma_{j_{k}})
\otimes \cdots \otimes
\exp(\theta_{i_{n}j_{m}}\Gamma_{i_{n}}\Gamma_{j_{m}}).
\end{eqnarray}

Consequently,
\begin{eqnarray}
R_{i_{1},\cdots,i_{m},j_{1},\cdots,j_{m}}
\Bigl(
\vert \Gamma_{j_{1}} \rangle \otimes \cdots \otimes \vert \Gamma_{j_{k}} \rangle \otimes \cdots \otimes \vert \Gamma_{j_{n}} \rangle
\Bigr)
&=&
\Bigl[
\cos(\theta_{i_{1}j_{1}})\, \vert \Gamma_{j_{1}} \rangle
+
\sin(\theta_{i_{1}j_{1}})\, \Gamma_{i_{1}} \vert \Gamma_{j_{1}} \rangle
\Bigr] \otimes \nonumber \\
&\cdots&
\otimes
\Bigl[
\cos(\theta_{i_{k}j_{k}})\, \vert \Gamma_{j_{k}} \rangle
+
\sin(\theta_{i_{k}j_{k}})\, \Gamma_{i_{k}} \vert \Gamma_{j_{k}} \rangle
\Bigr] \otimes \nonumber \\
&\cdots&
\otimes
\Bigl[
\cos(\theta_{i_{m}j_{m}})\, \vert \Gamma_{j_{n}} \rangle
+
\sin(\theta_{i_{m}j_{m}})\, \Gamma_{i_{m}} \vert \Gamma_{j_{m}} \rangle
\Bigr]. \nonumber
\end{eqnarray}

Therefore,
\begin{eqnarray}
\langle M_{jk} \rangle
&=&
\langle \psi_{i_{1}j_{1}},\cdots, \psi_{i_{k}j_{k}},\cdots, \psi_{i_{m}j_{m}}
\vert M_{jk} \vert
\psi_{i_{1}j_{1}},\cdots, \psi_{i_{k}j_{k}},\cdots, \psi_{i_{m}j_{m}} \rangle
\nonumber \\
&=&
\langle I \otimes \cdots \otimes \Gamma_{j_{k}} \otimes \cdots \otimes I \rangle
\nonumber \\
&=&
\Bigl(
\cos \theta_{i_{k}j_{k}} \langle \Gamma_{j_{k}} \vert
+
\sin \theta_{i_{k}j_{k}} \langle \Gamma_{j_{k}} \vert \Gamma_{i_{k}}
\Bigr)
\Gamma_{j_{k}}
\Bigl(
\cos \theta_{i_{k}j_{k}} \vert \Gamma_{j_{k}} \rangle
+
\sin \theta_{i_{k}j_{k}} \Gamma_{i_{k}} \vert \Gamma_{j_{k}} \rangle
\Bigr)
\nonumber \\
&=&
\cos^{2}\theta_{i_{k}j_{k}} - \sin^{2}\theta_{i_{k}j_{k}},
\end{eqnarray}
using the orthogonality, by Lemma~1. Therefore, $\langle M_{jk} \rangle \in [-1,1]$, and the theorem is proved.
\end{proof}

We could also perform a class decomposition using the chiral decomposition of the tensor product of irreducible representations. The existence of these classes can be better understood mathematically with the following theorem. 

\begin{theorem}
Let $V=\mathbb{R}^{2n}$ be an oriented Euclidean vector space and let
\begin{equation}
\rho : \mathrm{Spin}(2n) \longrightarrow \mathrm{End}(S)
\end{equation}
be the complex spinorial representation, where $S$ is the spinor space. For any integer $m\ge 1$, consider the tensor product representation
\begin{equation}
\rho^{\otimes m} : \mathrm{Spin}(2n)^{\otimes m} \longrightarrow \mathrm{End}(S^{\otimes m}),
\end{equation}

Then there exists a canonical decomposition
\begin{equation}
S^{\otimes m}
=
\bigoplus_{\varepsilon_1,\ldots,\varepsilon_m\in\{+,-\}}
S^{\varepsilon_1}\otimes \cdots \otimes S^{\varepsilon_m},
\end{equation}
where each $S^{\varepsilon_k}\subset S$ is defined by the spectrum of the chirality operator, and every summand
\begin{equation}
S^{\varepsilon_1}\otimes \cdots \otimes S^{\varepsilon_m}
\subset S^{\otimes m}
\end{equation}
is invariant under the action of $\mathrm{Spin}(2n)$ via $\rho^{\otimes m}$. The symbols $+$ and $-$ correspond to the subspaces with eigenvalues $+1$ and $-1$, respectively.
\end{theorem}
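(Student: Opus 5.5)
The plan is to build the chirality operator from the generators in Lemma~\ref{lem1}, use its eigenspaces to split $S$, and then take tensor products; the invariance claim has to be read as invariance under the product group $\mathrm{Spin}(2n)^{\otimes m}$ acting factorwise.

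\textbf{Chirality and its spectrum.} Take $S=\mathbb{C}^{2^n}$ with the generators $\Gamma_1,\ldots,\Gamma_{2n}$ of Lemma~\ref{lem1}. Set
\begin{equation*}
\Gamma_* = i^{n}\,\Gamma_1\Gamma_2\cdots\Gamma_{2n}.
\end{equation*}
Using $\{\Gamma_a,\Gamma_b\}=2\delta_{ab}$, reordering the product gives $(\Gamma_1\cdots\Gamma_{2n})^2=(-1)^{n(2n-1)}=(-1)^n$. Hence $\Gamma_*^2=1$. Each $\Gamma_a$ is Hermitian, so $\Gamma_*$ is Hermitian, and its spectrum lies in $\{+1,-1\}$.

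Define $S^{\pm}=\ker(\Gamma_*\mp 1)$ with projectors $P_\pm=\tfrac12(1\pm\Gamma_*)$. Then $S=S^+\oplus S^-$ orthogonally. Both pieces are nonzero: $\Gamma_1$ anticommutes with $\Gamma_*$, since moving it through the other $2n-1$ factors produces an odd sign, so $\Gamma_1$ swaps $S^+$ and $S^-$.

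\textbf{Invariance under $\mathrm{Spin}(2n)$.} Since $\Gamma_*$ anticommutes with every $\Gamma_a$, it commutes with every bivector $\Gamma_a\Gamma_b$. The group $\mathrm{Spin}(2n)$ is generated by exponentials $\exp(\theta\Gamma_a\Gamma_b)$, the same operators as the $R_{ij}$ used in the previous theorems. So $\rho(g)$ commutes with $\Gamma_*$, and hence with $P_\pm$, for every $g$. This gives $\rho(g)S^{\pm}\subset S^{\pm}$.

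\textbf{Tensor decomposition.} Distribute the tensor product over the direct sums:
\begin{equation*}
S^{\otimes m}=\bigotimes_{k=1}^m (S^+\oplus S^-)=\bigoplus_{\varepsilon}S^{\varepsilon_1}\otimes\cdots\otimes S^{\varepsilon_m}.
\end{equation*}
Each summand is the joint eigenspace of the commuting operators $\Gamma_*^{(k)}=I\otimes\cdots\otimes\Gamma_*\otimes\cdots\otimes I$, with eigenvalues $\varepsilon_k$. So the summands are mutually orthogonal and fixed by the spectra alone. That is the sense of "canonical": the splitting depends only on the orientation, which fixes the sign of $\Gamma_*$, and not on a choice of basis.

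An element $g_1\otimes\cdots\otimes g_m$ acts by $\rho(g_1)\otimes\cdots\otimes\rho(g_m)$. Each factor preserves $S^{\varepsilon_k}$, so every summand is invariant. In particular the diagonal action $g\otimes\cdots\otimes g$ preserves every summand as well.

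\textbf{Main obstacle.} The only delicate point is interpretive, not computational. The phrase "invariant under $\mathrm{Spin}(2n)$ via $\rho^{\otimes m}$" should be read as the product group $\mathrm{Spin}(2n)^{\otimes m}$ acting factorwise, with the diagonal action as a special case. Under that reading, the commutation $[\Gamma_*,\Gamma_a\Gamma_b]=0$ is the key step.

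This also ties back to the type-II classes of the preceding theorem. The state $\ket{\Gamma_{j_1}}\otimes\cdots\otimes\ket{\Gamma_{j_m}}$ is not itself chiral. The signs of the $\langle M_{j_k}\rangle$ play a role analogous to the labels $\varepsilon_k$: both are read off factor by factor from a commuting family of involutions.
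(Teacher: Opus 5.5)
Your proof is correct and follows essentially the paper's route: $\Gamma_*=i^n\Gamma_1\cdots\Gamma_{2n}$ squares to $1$, commutes with the bivectors $\Gamma_a\Gamma_b$ and hence with every $\rho(g)$, and the summands are the joint eigenspaces of the commuting operators $\Gamma_*^{(k)}$. The differences are minor: you prove invariance of $S^{\pm}$ on a single factor and then tensor, which also covers the factorwise action $\rho(g_1)\otimes\cdots\otimes\rho(g_m)$ (the paper treats only the diagonal action), and you add the check $S^{\pm}\neq 0$.

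One small repair. Hermiticity of $\Gamma_*$ does not follow just from each $\Gamma_a$ being Hermitian, since they anticommute. It comes from $\Gamma_*^\dagger=(-i)^n\Gamma_{2n}\cdots\Gamma_1=(-i)^n(-1)^{n(2n-1)}\Gamma_1\cdots\Gamma_{2n}=\Gamma_*$. It is needed only for orthogonality of the summands, not for the direct-sum decomposition itself.
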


\begin{proof}
We consider the tensor product space $S^{\otimes m}$ endowed with the diagonal action of $\mathrm{Spin}(2n)$,
\begin{equation}
\rho^{\otimes m}(g) = \rho(g)\otimes \cdots \otimes \rho(g)
\in \mathrm{End}(S^{\otimes m}).
\end{equation}

Let $\{e_1,\ldots,e_{2n}\}$ be an oriented orthonormal basis of $V$ and define
\begin{equation}
\Gamma_i := \rho(e_i) \in \mathrm{End}(S),
\end{equation}
which satisfy the Clifford relations
\begin{equation}
\Gamma_i \Gamma_j + \Gamma_j \Gamma_i = 2\delta_{ij} I.
\end{equation}
We use the same symbol $\rho$ to denote both the spinorial representation of $\mathrm{Spin}(2n)$ and its canonical extension to the Clifford algebra, the intended meaning being clear from the domain.  The chirality operator is defined by
\begin{equation}
\Gamma_* := i^{\,n}\,\Gamma_1 \Gamma_2 \cdots \Gamma_{2n}.
\end{equation}
Since $V$ is oriented, this operator is well defined and satisfies
\begin{equation}
\Gamma_*^2 = I,
\qquad
\Gamma_*^\dagger = \Gamma_*.
\end{equation}
Hence, $\Gamma_*$ is diagonalizable with eigenvalues $\pm 1$, and we define
\begin{equation}
S^\pm := \ker(\Gamma_* \mp I).
\end{equation}
On the tensor product space $S^{\otimes m}$, define the operators
\begin{equation}
\Gamma_*^{(k)} :=
\underbrace{I \otimes \cdots \otimes I \otimes}_{k-1} \Gamma_* \underbrace{\otimes I \otimes \cdots \otimes I}_{n-k+1}
\in \mathrm{End}(S^{\otimes m}),
\end{equation}
where $\Gamma_*$ acts on the $k$-th factor. These operators commute and are self-adjoint. For $\varepsilon_1,\ldots,\varepsilon_m \in \{+,-\}$, we have that
\begin{equation}
S^{\varepsilon_1}\otimes \cdots \otimes S^{\varepsilon_m}
=
\bigcap_{k=1}^m
\ker\bigl(\Gamma_*^{(k)} - \varepsilon_k I\bigr).
\end{equation}
By the spectral theorem applied to the commuting family
$\{\Gamma_*^{(k)}\}_{k=1}^m$, we obtain the orthogonal decomposition
\begin{equation}
S^{\otimes m}
=
\bigoplus_{\varepsilon_1,\ldots,\varepsilon_m\in\{+,-\}}
S^{\varepsilon_1}\otimes \cdots \otimes S^{\varepsilon_m}.
\end{equation}
Finally, the Lie algebra $\mathfrak{spin}(2n)$ acts on $S$ via operators of the form (see Appendix)
\begin{eqnarray}
\Gamma_i \Gamma_j ,
\end{eqnarray}
which commute with $\Gamma^*_k$. Therefore, for the representation $\rho$ extended to the algebra, we have
\begin{eqnarray}
[\Gamma^*_k, \rho(X)] = 0, \quad \forall X \in \mathfrak{spin}(2n). \label{eq:lie_comm}
\end{eqnarray}
Since the group $\mathrm{Spin}(2n)$ is generated by exponentials of elements of the algebra, i.e.,
\begin{eqnarray}
g' = \exp(X), \quad X \in \mathfrak{spin}(2n),
\end{eqnarray}
and $\rho$ is a linear representation, it follows that
\begin{eqnarray}
\rho(g') = \rho(\exp(X)) = \exp(\rho(X)).
\end{eqnarray}
Because $\Gamma^*_k$ commutes with $\rho(X)$, it also commutes with $\exp(\rho(X))$, giving
\begin{eqnarray}
[\Gamma^*_k, \rho(g')] = 0, \quad \forall g' \in \mathrm{Spin}(2n).
\end{eqnarray}
Thus, for any $\psi \in S^{\otimes m}$,
\begin{eqnarray}
(\Gamma^*_k \mp I)\rho(g')\psi = \rho(g')(\Gamma^*_k \mp I)\psi = 0,
\end{eqnarray}
showing that $\rho(g') \in \ker(\Gamma^*_k \mp I)$, which proves the claim.
\end{proof}

\noindent\textbf{Remark.}
Although the representation of the group $\mathrm{Spin}(2n)$ takes values in $\mathrm{GL}(S)$, its canonical extension to the Clifford algebra is naturally an algebra representation with values in $\mathrm{End}(S)$, since generic elements of the algebra are not necessarily invertible.

The following diagrams illustrate the fact that the chiral decomposition of the tensor product spinor space is compatible with the algebraic action of the Clifford algebra. By means of the canonical projectors onto the positive and negative chiral subspaces, the diagrams illustrate that the action of the total chirality operator intertwines with the projection onto each eigenspace. In particular, it shows that applying the chirality operator before projecting onto a given chiral sector, or first projecting and then applying the restricted action, leads to the same result up to the corresponding eigenvalue.

\[
\begin{tikzcd}
S^{\otimes m}
\arrow[r, "P_{+}"]
\arrow[d, "\Gamma^{(m)}_*"']
&
(S^{\otimes m})^{+}
\arrow[d, "+\,\mathrm{id}"]
\\
S^{\otimes m}
\arrow[r, "P_{+}"']
&
(S^{\otimes m})^{+}
\end{tikzcd}
\qquad
\begin{tikzcd}
S^{\otimes m}
\arrow[r, "P_{-}"]
\arrow[d, "\Gamma^{(m)}_*"']
&
(S^{\otimes m})^{-}
\arrow[d, "-\,\mathrm{id}"]
\\
S^{\otimes m}
\arrow[r, "P_{-}"']
&
(S^{\otimes m})^{-}
\end{tikzcd}
\]
where,
\[
P_{\pm} = \frac{1}{2}\bigl(1 \pm \Gamma^{(m)}_*\bigr).
\]
and $(S^{\otimes m})^{+}$ and $(S^{\otimes m})^{-}$ are the subspaces of $S^{\otimes m}$ with eigenvalues $+1$ and $-1$, respectively.

In order to illustrate our formulation, consider the Clifford algebra $Cl(4)$. 
Two of its generators can be written in terms of tensor products of Pauli matrices as
\begin{eqnarray}
\Gamma_1 &=& \sigma_2 \otimes I =
\begin{pmatrix}
0 & 0 & -i & 0 \\
0 & 0 & 0 & -i \\
i & 0 & 0 & 0 \\
0 & i & 0 & 0
\end{pmatrix},
\label{eq:Gamma1}
\end{eqnarray}
and
\begin{eqnarray}
\Gamma_2 &=& \sigma_1 \otimes \sigma_2 =
\begin{pmatrix}
0 & 0 & 0 & -i \\
0 & 0 & i & 0 \\
0 & -i & 0 & 0 \\
i & 0 & 0 & 0
\end{pmatrix}.
\label{eq:Gamma2}
\end{eqnarray}
The orthonormal spinorial states obtained from these generators are given by
\begin{eqnarray}
\ket{\Gamma_1}
&=& \frac{1}{\sqrt{2}}\left(\ket{00} + i\ket{10}\right)
= \frac{1}{\sqrt{2}}
\begin{pmatrix}
1 \\
0 \\
i \\
0
\end{pmatrix},
\label{eq:Gamma1_state}
\end{eqnarray}
and
\begin{eqnarray}
\Gamma_2 \ket{\Gamma_1}
&=& \frac{i}{\sqrt{2}}\left(\ket{01} + \ket{11}\right)
= \frac{1}{\sqrt{2}}
\begin{pmatrix}
0 \\
i \\
0 \\
i
\end{pmatrix}.
\label{eq:Gamma2Gamma1_state}
\end{eqnarray}
Using the generators $\Gamma_1, \Gamma_2$, one can then construct the general state
\begin{eqnarray}
\ket{\psi}
&=& \exp\left(\frac{\theta}{2} \Gamma_{2} \Gamma_{1}\right)\ket{\Gamma_{1}} \label{eq:psi_def} \\
&=& \cos(\theta/2)\ket{\Gamma_1}
+ \sin(\theta/2)\, \Gamma_2 \Gamma_1 \ket{\Gamma_1} \label{eq:psi_expansion} \\
&=& \frac{1}{\sqrt{2}}
\begin{pmatrix}
\cos(\theta/2) \\
i\sin(\theta/2) \\
i\cos(\theta/2) \\
i\sin(\theta/2)
\end{pmatrix}.
\label{eq:psi_components}
\end{eqnarray}

Considering that, in our proposal, the initial states do not need to be in a uniform superposition and that we may consider several subspaces of interest, we obtain a generalized version of the Grover operator. This will be useful for our second algorithm.

\begin{definition}
The generalized Grover operator is defined as
\begin{equation} \label{grover_g_equation}
\hat{G}
=
\left(
2\sum_{i=1}^{k} \ket{\psi_i}\bra{\psi_i}
-
\mathbb{I}
\right)
\prod_{i=1}^{k} \hat{O}_i .
\end{equation}
\end{definition}

The generalized form of $\ket{\psi_i}$ can be written as
\begin{equation}
\ket{\psi_i}
=
R_i(\theta)\, \ket{0}^{\otimes n},
\end{equation}
with $\langle \psi_i \vert \psi_j \rangle = \delta_{ij}$. The operator $R_i(\theta)$ is defined as
\begin{equation}
R_i(\theta)
=
\exp\left[
i \sum_{j_1 \cdots j_n}
\omega_{j_1 \cdots j_n}^{i}\,
\theta_{j_1 \cdots j_n}^{i}\,
\Gamma_{j_1}\cdots\Gamma_{j_n}
\right].
\end{equation}
We must show that this operator is unitary.

\begin{proposition}
The generalized Grover operator, given by Eq.~\eqref{grover_g_equation}, is unitary.
\end{proposition}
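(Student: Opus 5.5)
The plan is to show separately that each of the two factors in Eq.~\eqref{grover_g_equation} is unitary, since a product of unitary operators is unitary. We take the oracles $\hat O_i$ to be unitary, as any oracle is in the standard setting. In the search algorithm they are built from Clifford generators. Each $\Gamma_j$ in Eq.~\eqref{eq_def_gerador_clifford} is Hermitian and satisfies $\Gamma_j^2=I$, so it is unitary, and products of such generators are therefore unitary as well. Hence $\prod_{i=1}^k \hat O_i$ is a finite product of unitaries and is unitary. What remains is the diffusion part $D = 2P - \mathbb{I}$, where $P=\sum_{i=1}^k \ket{\psi_i}\bra{\psi_i}$.

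For $D$, the key input is the orthonormality $\langle\psi_i|\psi_j\rangle=\delta_{ij}$ stated just before the proposition. From it I would deduce that $P$ is an orthogonal projector. Hermiticity $P^\dagger=P$ is immediate, since each $\ket{\psi_i}\bra{\psi_i}$ is Hermitian. Idempotence follows from
\begin{equation*}
P^2=\sum_{i,j}\ket{\psi_i}\langle\psi_i|\psi_j\rangle\bra{\psi_j}=\sum_i\ket{\psi_i}\bra{\psi_i}=P .
\end{equation*}
It then follows that $D^\dagger=D$ and $D^\dagger D = D^2 = 4P^2-4P+\mathbb{I}=\mathbb{I}$. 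So $D$ is a reflection, in particular unitary. Consequently
\begin{equation*}
\hat G^\dagger\hat G=\Bigl(\prod_i\hat O_i\Bigr)^\dagger D^\dagger D\prod_i \hat O_i=\mathbb{I},
\end{equation*}
and the same holds for $\hat G\hat G^\dagger$, because we are working in finite dimension.

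The step I expect to need the most care is justifying that the states $\ket{\psi_i}=R_i(\theta)\ket{0}^{\otimes n}$ are legitimate normalized and mutually orthogonal states. If the orthonormality is only assumed, the argument above is complete. To support it, I would check that each $R_i(\theta)$ is unitary. The exponent is $iA$ with $A=\sum \omega\theta\,\Gamma_{j_1}\cdots\Gamma_{j_n}$, and we need $A$ to be Hermitian. A product of distinct anticommuting Hermitian generators of length $r$ satisfies
\begin{equation*}
(\Gamma_{j_1}\cdots\Gamma_{j_r})^\dagger=(-1)^{r(r-1)/2}\,\Gamma_{j_1}\cdots\Gamma_{j_r},
\end{equation*}
so $A$ is Hermitian provided the coefficients $\omega\theta$ are chosen real or purely imaginary according to this sign. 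Under that choice, $R_i=e^{iA}$ is unitary and $\|\psi_i\|=1$. The mutual orthogonality of distinct $\psi_i$ is then a constraint on the choice of parameters, for instance arranged via Lemma~\ref{lem1}. I would state it as a hypothesis rather than derive it.
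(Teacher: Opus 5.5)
Your proof is correct and is essentially the paper's argument: both cancel the oracle product against its adjoint using the unitarity of the $\hat O_i$, and both use $\langle\psi_i|\psi_j\rangle=\delta_{ij}$ to get $(2P-\mathbb{I})^2=4P^2-4P+\mathbb{I}=\mathbb{I}$. The difference is only in presentation: the paper expands $\hat G\hat G^\dagger$ directly rather than first showing that each factor is unitary, and it likewise takes the orthonormality of the $\ket{\psi_i}$ as a hypothesis rather than deriving it.
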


\begin{proof}

\begin{eqnarray}
\hat{G}\hat{G}^{\dagger}
=
\left(
2\sum_{i=1}^{k} \ket{\psi_i}\bra{\psi_i}
-
\mathbb{I}
\right)
\left(\prod_{i=1}^{k} \hat{O}_i\right)
\left[
\left(
2\sum_{j=1}^{k} \ket{\psi_j}\bra{\psi_j}
-
\mathbb{I}
\right)
\left(\prod_{j=1}^{k} \hat{O}_j\right)
\right]^{\dagger}. \nonumber 
\end{eqnarray}

\begin{eqnarray}
=
\left(
2\sum_{i=1}^{k} \ket{\psi_i}\bra{\psi_i}
-
\mathbb{I}
\right)
\left(\prod_{i=1}^{k} \hat{O}_i\right)
\left(\prod_{j=1}^{k} \hat{O}_j^{\dagger}\right)
\left(
2\sum_{j=1}^{k} \ket{\psi_j}\bra{\psi_j}
-
\mathbb{I}
\right).
\end{eqnarray}

Since the oracles are unitary operators, $\hat{O}_i \hat{O}_i^{\dagger} = \mathbb{I}$, and therefore
\begin{eqnarray}
\left(\prod_{i=1}^{k} \hat{O}_i\right)
\left(\prod_{j=1}^{k} \hat{O}_j^{\dagger}\right)
=
\mathbb{I}.
\end{eqnarray}

Thus,
\begin{eqnarray}
\hat{G}\hat{G}^{\dagger}
=
\left(
2\sum_{i=1}^{k} \ket{\psi_i}\bra{\psi_i}
-
\mathbb{I}
\right)
\left(
2\sum_{j=1}^{k} \ket{\psi_j}\bra{\psi_j}
-
\mathbb{I}
\right). \nonumber
\end{eqnarray}

\begin{eqnarray}
=
4\sum_{i,j=1}^{k}
\ket{\psi_i}\,
\langle \psi_i \vert \psi_j \rangle\,
\bra{\psi_j}
-
2\sum_{i=1}^{k} \ket{\psi_i}\bra{\psi_i}
-
2\sum_{j=1}^{k} \ket{\psi_j}\bra{\psi_j}
+
\mathbb{I}.
\end{eqnarray}

Since $\langle \psi_i \vert \psi_j \rangle = \delta_{ij}$, it follows that
\begin{eqnarray}
\hat{G}\hat{G}^{\dagger}
=
\mathbb{I}
=
\hat{G}^{\dagger}\hat{G}.
\end{eqnarray}
Therefore, $\hat{G}$ is unitary.

\end{proof}

\section{Spinorial quantum algorithms for classification and search}
{
In this section, based on the previous results, we propose three algorithms. The objective is to determine whether the items belong to a given class.

\begin{algorithm}[H]
	\caption{Quantum classification}
	\label{alg.:Algoritmo1}
	\begin{algorithmic}[1]
		\STATE \textbf{Input:} Quantum state
		\[
		\ket{\psi}
		=
		\cos \theta_{ij}\ket{\Gamma_j}+\sin \theta_{ij}\Gamma_{i}\ket{\Gamma_j}
		\]
		\IF{$\langle \psi | \Gamma_k | \psi \rangle = \delta_{ik}c_i $}
		\STATE The item belongs to Class $C_i$
		\ELSE
		\STATE The item does not belong to Class $C_{i}$
		\ENDIF
	\end{algorithmic}
\end{algorithm}

\begin{proposition}\label{sample2}
Consider Algorithm 1 applied to a finite dataset
\begin{eqnarray}
\mathcal D= \left\{|\psi_1\rangle,\dots,|\psi_M\rangle \right\},
\end{eqnarray}
where each quantum state is classified according to the expectation values
\begin{eqnarray}
\langle \psi | \Gamma_k | \psi \rangle= \delta_{ik}c_i, \qquad k=1,\dots,2n,
\end{eqnarray}
with \(c_i\neq0\). Suppose that for every state in the dataset there exists a margin $\eta>0$ such that
\begin{eqnarray}
|\langle \psi|\Gamma_i|\psi\rangle|\ge \eta
\end{eqnarray}
for the correct class and
\begin{eqnarray}
|\langle \psi|\Gamma_k|\psi\rangle|\le \eta/2, \qquad k\neq i.
\end{eqnarray}
For each observable $\Gamma_k$, the algorithm estimates the expectation value
$\langle \psi|\Gamma_k|\psi\rangle$ from $N$ independent measurements and classifies the input state according to the empirical averages.Then for any confidence parameter $\delta\in(0,1)$, there exists an integer $N_{\mathcal D}$ such that for every
$|\psi\rangle\in\mathcal D$, Algorithm 1 outputs the correct class with probability at least $1-\delta$.Furthermore, the sample complexity satisfies
\begin{eqnarray}
N=O\left(\frac{1}{\eta^2}\log\frac{n}{\delta}\right).
\end{eqnarray}
\end{proposition}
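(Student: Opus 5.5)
The idea is to reduce the statement to a standard concentration argument for bounded random variables, followed by a union bound over the $2n$ observables. First we make the decision rule of Algorithm 1 precise in its empirical form. For each $k=1,\dots,2n$, let $\hat\mu_k$ be the average of $N$ independent measurement outcomes of $\Gamma_k$ on copies of $\ket{\psi}$. We declare class $C_i$ if $|\hat\mu_i|>3\eta/4$ and $|\hat\mu_k|<3\eta/4$ for all $k\neq i$; equivalently, the output is the unique index whose empirical average exceeds the threshold $3\eta/4$. The threshold is placed at the midpoint between the margins $\eta$ and $\eta/2$, so a deviation of less than $\eta/4$ in every coordinate guarantees a correct decision.

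Next we use the spectral structure from Lemma~\ref{lem1}. Each $\Gamma_k$ is Hermitian with $\Gamma_k^2=I$, so its eigenvalues are $\pm1$. A projective measurement of $\Gamma_k$ on $\ket{\psi}$ therefore yields i.i.d.\ outcomes $X^{(k)}_1,\dots,X^{(k)}_N\in\{-1,+1\}$ with $\mathbb E[X^{(k)}_r]=\langle\psi|\Gamma_k|\psi\rangle=:\mu_k$. Hoeffding's inequality for variables of range $2$ then gives
\begin{equation*}
\Pr\bigl(|\hat\mu_k-\mu_k|\ge \eta/4\bigr)\le 2\exp\!\left(-\frac{N\eta^2}{32}\right).
\end{equation*}
A union bound over the $2n$ observables shows that all deviations are simultaneously below $\eta/4$ with probability at least $1-4n\exp(-N\eta^2/32)$. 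This is at least $1-\delta$ as soon as
\begin{equation*}
N\ge \frac{32}{\eta^2}\log\frac{4n}{\delta},
\end{equation*}
which is $O(\eta^{-2}\log(n/\delta))$.

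On this good event the decision is correct for the following reason. For the true class $i$ we have $|\hat\mu_i|\ge|\mu_i|-\eta/4\ge 3\eta/4$, and for every $k\ne i$ we have $|\hat\mu_k|\le \eta/2+\eta/4=3\eta/4$.

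The main delicacy is the tie at exactly $3\eta/4$. I would handle it by running Hoeffding with deviation $\eta/4-\epsilon$, or more simply with deviation $\eta/5$, which only changes the constant. Because the bound on $N$ is independent of the particular state, we may take $N_{\mathcal D}$ to be this uniform value; finiteness of $\mathcal D$ is then not even needed, although it is consistent with the statement.

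A secondary point is that the exact test $\langle\psi|\Gamma_k|\psi\rangle=\delta_{ik}c_i$ in Algorithm 1 cannot be checked from finite samples. The proof must therefore state explicitly that the algorithm is interpreted through the thresholded empirical rule, as the proposition indicates. If measurements of different $\Gamma_k$ are taken on separate copies, independence across $k$ is not required, since the union bound suffices.
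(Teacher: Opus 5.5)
Your proposal is correct and follows essentially the paper's proof: Hoeffding's inequality for the $\pm1$ outcomes of each $\Gamma_k$, a union bound over the $2n$ generators, and a state-independent choice of $N_{\mathcal D}$, giving $N=O(\eta^{-2}\log(n/\delta))$. The only difference is the deviation scale. You use $\eta/4$ with threshold $3\eta/4$, which actually uses the stated $\eta/2$ off-class margin, whereas the paper's choice $\epsilon=\eta/2$ works only because it uses that the off-class expectations are exactly zero. Your tie concern is also moot: on the event $|\hat\mu_k-\mu_k|<\eta/4$ for all $k$, both of your inequalities are already strict.
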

\begin{proof}
For each Clifford observable $\Gamma_k$, let
\begin{eqnarray}
X_r^{(k)}\in\{-1,+1\}, \qquad r=1,\dots,N,
\end{eqnarray}
denote the outcomes of $N$ independent measurements. Define the empirical estimator
\begin{eqnarray}
\hat{\mu}_k=\frac1N\sum_{r=1}^N X_r^{(k)}.
\end{eqnarray}
Since
\begin{eqnarray}
\mathbb E[\hat{\mu}_k]=\langle \psi|\Gamma_k|\psi\rangle,
\end{eqnarray}
and
\begin{eqnarray}
\langle \psi|\Gamma_k|\psi\rangle=\sin(2\theta_{ij})\delta_{ik},
\end{eqnarray}
Hoeffding's inequality  \cite{Handel} implies
\begin{eqnarray}
\Pr\left(|\hat{\mu}_k-\langle \psi|\Gamma_k|\psi\rangle|\ge\epsilon\right)\le2e^{-N\epsilon^2/2}.
\end{eqnarray}
Choosing
\begin{eqnarray}
\epsilon=\frac{\eta}{2},
\end{eqnarray}
where
\begin{eqnarray}
|\sin(2\theta_{ij})|\ge\eta>0,
\end{eqnarray}
the probability that the empirical estimator changes the classification is bounded by
\begin{eqnarray}
2e^{-N\eta^2/8}.
\end{eqnarray}
Since the algorithm evaluates \(2n\) observables, the union bound gives
\begin{eqnarray}
\Pr(\text{classification error})\le4n\,e^{-N\eta^2/8}.
\end{eqnarray}
Therefore, to guarantee total error probability at most \(\delta\), it suffices to choose \(N\) such that
\begin{eqnarray}
4n\,e^{-N\eta^2/8}\le\delta.
\end{eqnarray}
Solving for $N$,
\begin{eqnarray}
N\ge\frac{8}{\eta^2}\log\left(\frac{4n}{\delta}\right).
\end{eqnarray}
Hence,
\begin{eqnarray}
N=O\left(\frac{1}{\eta^2}\log\frac{n}{\delta}\right).
\end{eqnarray}
Finally, since the dataset $\mathcal D$ is finite, one may choose a uniform constant
$N_{\mathcal D}$ valid for all states in the dataset, completing the proof.

\end{proof}

\begin{algorithm}[H]
	\caption{Quantum classification}
	\label{alg.:Algoritmo2}
	\begin{algorithmic}[1]
		\STATE \textbf{Input:} Quantum state
		\[
		\ket{\psi}
		=
		\cos \theta_{ij}\ket{\Gamma_j}+\sin \theta_{ij}\Gamma_{i}\ket{\Gamma_j},
		\]
		
		\IF{$\langle \psi | \Gamma_j | \psi \rangle > 0$}
		\STATE The item belongs to Class A
		\ELSE
		\STATE The item belongs to Class B
		\ENDIF
	\end{algorithmic}
\end{algorithm}

This algorithm can be derived from a particular case of Definition 2 and Lemma 1 for two classes. The state $\ket{\psi}$ is then obtained as:
\begin{eqnarray}
	\ket{\psi}=\exp{(\theta\Gamma_{i}\Gamma_{j})}\ket{\Gamma_i}=\cos{\theta}\ket{\Gamma_{i}}+ \sin{\theta}\Gamma_{i}\Gamma_{j}\ket{\Gamma_j}
\end{eqnarray}
In fact, computing $\bra{\psi}O\ket{\psi}$, we obtain, for arbitrary $\theta$ and $O = \Gamma_j$,

\begin{equation}
	\begin{split}
		\bra{\psi}O\ket{\psi}
		&=
		\left(
		\cos\theta \bra{\alpha}
		+
		\sin\theta \bra{\beta}
		\right)
		\Gamma_j
		\left(
		\cos\theta \ket{\alpha}
		+
		\sin\theta \ket{\beta}
		\right)
		\\
		&=
		\cos^2\theta \, \innerproduct{\alpha}{\alpha}
		-
		\sin^2\theta \, \innerproduct{\beta}{\beta}
		\\
		&>0 .
	\end{split}
\end{equation}

On the other hand, for $O=-\Gamma_j$
\begin{equation}
	\bra{\psi}(-\Gamma_j)\ket{\psi} < 0 .
\end{equation}

The algorithm applies to quantum data, where each datum is represented by a quantum state $\ket{\psi}$. The classification is performed by measuring the expectation value of a Clifford generator, resulting in a decision rule in Hilbert space. In the quantum circuit model, both the state preparation and the measurement can be implemented using $O(k)$ elementary quantum gates, where $k$ denotes the number of qubits on which the Clifford generators act nontrivially. The advantage of the proposed classifier does not lie in an asymptotic speedup relative to classical algorithms operating on classical data, but rather in its inherent ability to process quantum data directly. In particular, a classical approach would require full or partial quantum state tomography to estimate the same expectation value, whereas the present algorithm accesses it operationally through direct measurement.

The following proposition analyzes the number of measurement repetitions (sample complexity) required for Algorithm 2 to achieve a given classification accuracy.

\begin{proposition} \label{sample}
	Consider Algorithm~2 applied to a finite dataset
	\begin{eqnarray}
		\mathcal D=\{\,|\psi_{\theta_1}\rangle,\dots,|\psi_{\theta_M}\rangle\,\},
	\end{eqnarray}
	where each input state is of the form
	\begin{eqnarray}
		|\psi_{\theta}\rangle
		=
		\cos\theta\,|\Gamma_i\rangle
		+
		\sin\theta\,\Gamma_i\Gamma_j|\Gamma_j\rangle,
	\end{eqnarray}
	and the classification observable is the Clifford generator \(O=\Gamma_j\). Suppose that no input state lies exactly on the decision boundary, that is,
	\begin{eqnarray}
		\cos(2\theta_\ell)\neq 0
		\quad \text{for all } \ell=1,\dots,M.
	\end{eqnarray}
	For each input state, the algorithm estimates the expectation value
	\(\langle \Gamma_j\rangle=\cos(2\theta)\) by repeating the measurement
	\(N\) times and classifies the state according to the sign of the empirical mean.
	Then, for any fixed confidence parameter \(\delta\in(0,1)\), there exists
	a finite integer \(N_{\mathcal D}\) such that, for every
	\(|\psi_{\theta_\ell}\rangle\in\mathcal D\), the algorithm outputs the
	correct class with probability at least \(1-\delta\).
	In particular, since \(\mathcal D\) is finite, the required number of
	measurements is a constant independent of the system size, and the sample
	complexity of Algorithm~1 on \(\mathcal D\) is \(O(1)\).
\end{proposition}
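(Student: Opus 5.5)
We reduce to the same concentration argument used for Proposition~\ref{sample2}, now with a single observable and a sign decision. Measuring the Hermitian involution $\Gamma_j$ (eigenvalues $\pm1$ by Lemma~\ref{lem1}) on $|\psi_{\theta_\ell}\rangle$ yields i.i.d.\ outcomes $X_r\in\{-1,+1\}$, $r=1,\dots,N$. Their mean is $\langle\psi_{\theta_\ell}|\Gamma_j|\psi_{\theta_\ell}\rangle$, which by the computation preceding the proposition equals $\cos^2\theta_\ell-\sin^2\theta_\ell=\cos(2\theta_\ell)$. This uses the fact that $|\alpha\rangle$ and $|\beta\rangle$ are orthonormal eigenvectors of $\Gamma_j$ with opposite eigenvalues, which is the eigenvalue-flipping property $\Gamma_i\Gamma_j|\Gamma_j\rangle\mapsto-\lambda_j$ from Lemma~\ref{lem1}. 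The cross terms vanish by orthogonality of eigenvectors with distinct eigenvalues. The empirical estimator is $\hat\mu=\frac1N\sum_r X_r$.

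Set the per-state margin $\eta_\ell:=|\cos(2\theta_\ell)|>0$, which is strictly positive by the hypothesis that no state lies on the boundary. The classification is wrong only if $\mathrm{sgn}(\hat\mu)\neq\mathrm{sgn}(\cos 2\theta_\ell)$. This requires $|\hat\mu-\cos2\theta_\ell|\ge\eta_\ell$, including the tie case $\hat\mu=0$, which is harmless because the event $|\hat\mu-\cos 2\theta_\ell|\ge \eta_\ell$ already covers it. Hoeffding's inequality for variables bounded in $[-1,1]$ gives
\begin{equation*}
\Pr(\text{error on }\ell)\le 2\exp\!\left(-N\eta_\ell^2/2\right).
\end{equation*}
This bound is at most $\delta$ once $N\ge \frac{2}{\eta_\ell^2}\log\frac{2}{\delta}$.

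Finiteness of $\mathcal D$ is the key point. Define $\eta_{\mathcal D}:=\min_{1\le\ell\le M}|\cos 2\theta_\ell|$. This is a minimum of finitely many positive numbers, hence positive. Take
\begin{equation*}
N_{\mathcal D}:=\left\lceil \frac{2}{\eta_{\mathcal D}^2}\log\frac{2}{\delta}\right\rceil .
\end{equation*}
With this choice the bound above holds simultaneously for every state in $\mathcal D$, so each state is classified correctly with probability at least $1-\delta$. Only one observable is measured, so unlike Proposition~\ref{sample2} no union bound over $2n$ observables is needed and no factor of $n$ appears. Consequently $N_{\mathcal D}$ depends only on $\delta$ and the fixed dataset, not on the number of qubits, which gives the $O(1)$ claim for fixed $\mathcal D$ and $\delta$.

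The main obstacle is conceptual rather than technical. Without finiteness of $\mathcal D$ (or a uniform margin), $\eta_\ell$ could approach $0$ and no uniform $N$ would exist. The $O(1)$ statement must therefore be read with $\delta$ and $\mathcal D$ held fixed, with the honest dependence being $O(\eta_{\mathcal D}^{-2}\log(1/\delta))$. A second point to verify carefully is the identity $\langle\Gamma_j\rangle=\cos 2\theta$. It needs $|\alpha\rangle$ and $|\beta\rangle$ to be orthonormal $\pm$ eigenvectors of $\Gamma_j$, which I would check directly from Lemma~\ref{lem1} rather than rely on the loosely written state expression.
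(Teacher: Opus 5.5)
Your proof is correct and follows the paper's argument. A sign error forces $|\bar X_N-\cos 2\theta|\ge|\cos 2\theta|$, Hoeffding's inequality bounds this event, and finiteness of $\mathcal D$ gives a strictly positive minimum margin $\min_\ell|\cos 2\theta_\ell|$, hence a uniform $N_{\mathcal D}$. Your bound $2e^{-N\varepsilon^2/2}$ is the correct Hoeffding constant for $\pm1$-valued outcomes; the paper's $2e^{-2N\varepsilon^2}$ is the $[0,1]$ version, which makes its explicit $N_{\mathcal D}$ too small by a factor of $4$, though this does not affect the existence or $O(1)$ conclusions.
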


\begin{proof}
	Let \(|\psi_{\theta}\rangle\in\mathcal D\) be an inpute state.	A single measurement of the observable \(\Gamma_j\) produces a random variable
	\begin{eqnarray}
		X\in\{+1,-1\},
	\end{eqnarray}
	with expectation value
	\begin{eqnarray}
		\mathbb{E}[X]=\langle \psi_{\theta}|\Gamma_j|\psi_{\theta}\rangle=\cos(2\theta).
	\end{eqnarray}
	Let
	\begin{eqnarray}
		\bar X_N=\frac{1}{N}\sum_{k=1}^N X_k
	\end{eqnarray}
	be the empirical mean of \(N\) independent repetitions.
	Algorithm~1 assigns the class according to the sign of \(\bar X_N\). A misclassification occurs if and only if
	\begin{eqnarray}
		\operatorname{sign}(\bar X_N)\neq \operatorname{sign}(\cos(2\theta)).
	\end{eqnarray}
	As \(\cos(2\theta)\neq 0\) by hypothesis, this condition is equivalent to
	\begin{eqnarray}
		\bar X_N\,\cos(2\theta) < 0.
	\end{eqnarray}
	Suppose first that \(\cos(2\theta)>0\).
	Then a sign error implies \(\bar X_N\le 0\), and therefore
	\begin{eqnarray}
		\bar X_N - \cos(2\theta)\le -\cos(2\theta).
	\end{eqnarray}
	Taking absolute values yields
	\begin{eqnarray}
		\bigl|\bar X_N - \cos(2\theta)\bigr|
		\ge
		\cos(2\theta)
		=
		|\cos(2\theta)|.
	\end{eqnarray}
	If \(\cos(2\theta)<0\), a sign error implies \(\bar X_N\ge 0\), so that
	\begin{eqnarray}
		\bar X_N - \cos(2\theta)\ge -\cos(2\theta),
	\end{eqnarray}
	and again
	\begin{eqnarray}
		\bigl|\bar X_N - \cos(2\theta)\bigr|
		\ge
		|\cos(2\theta)|.
	\end{eqnarray}
	Thus, in all cases, an incorrect classification can occur only if
	\begin{eqnarray}
		\bigl|\bar X_N - \cos(2\theta)\bigr|
		\ge
		|\cos(2\theta)|.
	\end{eqnarray}
	The random variables \(X_k\) are independent and bounded in the interval
	\([-1,1]\).
	Therefore, by Hoeffding's inequality \cite{Handel}, for any \(\varepsilon>0\),
	\begin{eqnarray}
		\Pr\!\left(
		\bigl|\bar X_N - \cos(2\theta)\bigr|\ge \varepsilon
		\right)
		\le
		2\exp(-2N\varepsilon^2).
	\end{eqnarray}
	Since the dataset \(\mathcal D\) is finite and contains no state exactly on the decision boundary, we have
	\begin{eqnarray}
		g_{\min}
		:=
		\min_{|\psi_{\theta_\ell}\rangle\in\mathcal D}
		|\cos(2\theta_\ell)|
	\end{eqnarray}
	which exists and satisfies \(g_{\min}>0\). Choosing \(\varepsilon=g_{\min}\), we obtain the bound
	\begin{eqnarray}
		\Pr(\text{classification error})
		\le
		2\exp\!\left(-2N g_{\min}^2\right).
	\end{eqnarray}
	Let \(\delta\in(0,1)\) be a confidence level. Choosing
	\begin{eqnarray}
		N_{\mathcal D} \;\ge\;
		\frac{1}{2g_{\min}^2}
		\ln\!\left(\frac{2}{\delta}\right)
	\end{eqnarray}
guarantees that every state in $\mathcal{D}$ is classified correctly with probability at least \(1-\delta\). Since both \(g_{\min}\) and \(\delta\) depend only on the fixed dataset \(\mathcal D\) and the desired confidence level, and not on the Hilbert space dimension, the number of qubits, or the system size, the required number of measurement repetitions is constant once \(\mathcal D\) and \(\delta\) are fixed. Consequently, the number of samples required per input state does not scale with the size of the quantum system, and the sample complexity of Algorithm~1 on the dataset \(\mathcal D\) is \(O(1)\).
\end{proof}

Proposition \ref{sample} suggests an advantage of Algorithm 1 in the context of quantum data classification, since the task can be performed with constant sample complexity without reconstructing the quantum state, in contrast to classical approaches that typically require tomographic access to the data.

To validate Algorithm \ref{alg.:Algoritmo1}, the circuit was executed on the quantum processor \texttt{ibm\_torino} via the \textit{Runtime Primitive Sampler} of IBM Quantum. The experiment verified whether the encoding of states by Clifford operators preserves the theoretical structure under the effects of noise in a NISQ device. The results for several combinations of bivectors are presented in Figure \ref{fig:comparativo_hist_ACE_prob_alg1}.

\begin{figure}[H]
	
	\centering
	\subfloat{
		\includegraphics[scale=0.7]{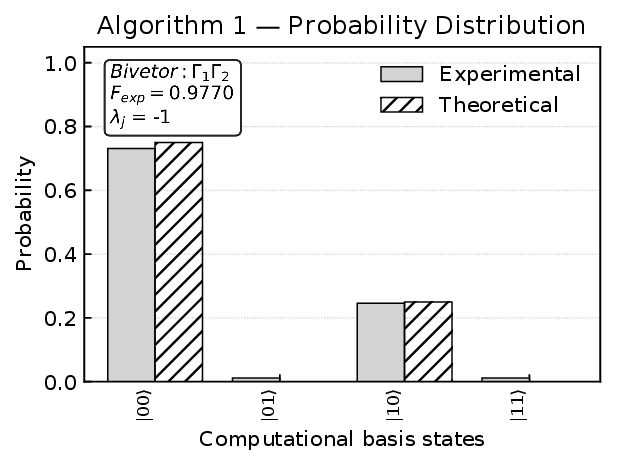}
	}
	\subfloat{
		\includegraphics[scale=0.7]{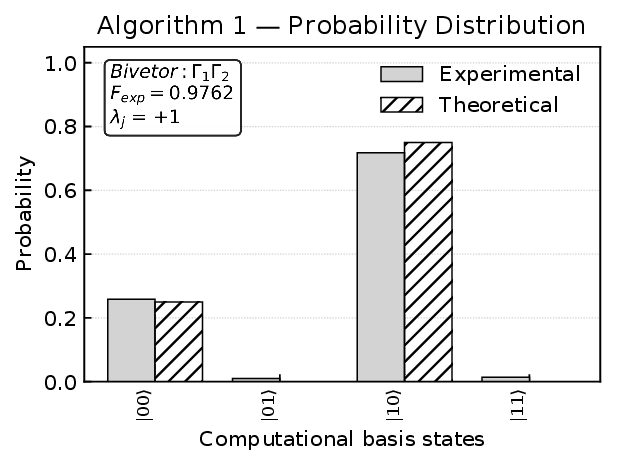}
	} \\
	\subfloat{
		\includegraphics[scale=0.7]{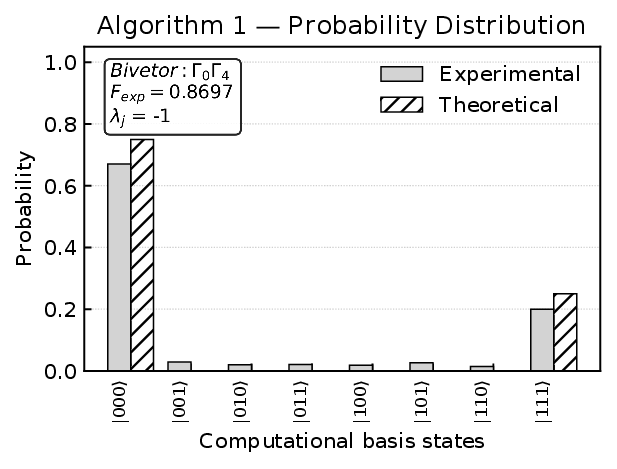}
	}
	\subfloat{
		\includegraphics[scale=0.7]{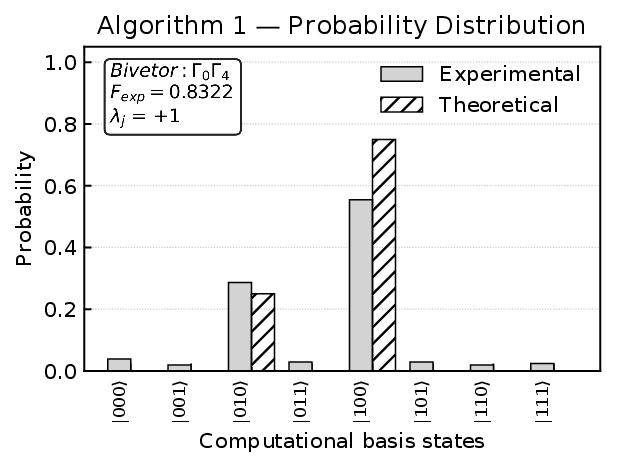}
	}\\
	\subfloat{
		\includegraphics[scale=0.7]{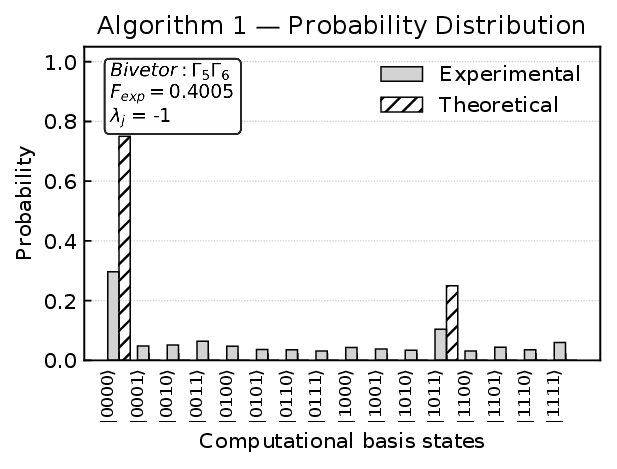}
	}
	\subfloat{
		\includegraphics[scale=0.7]{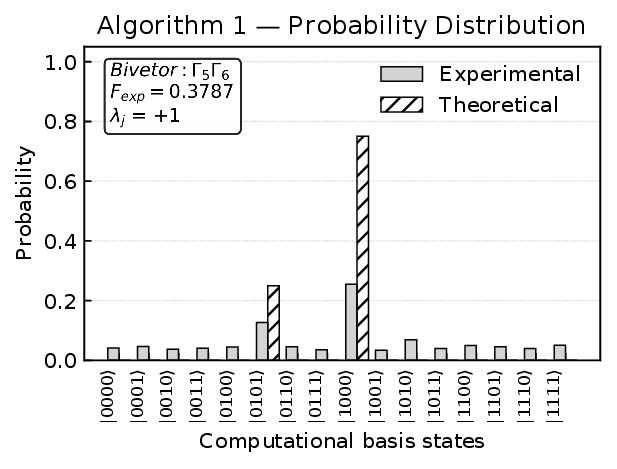}
	}\\
	\subfloat{
		\includegraphics[scale=0.7]{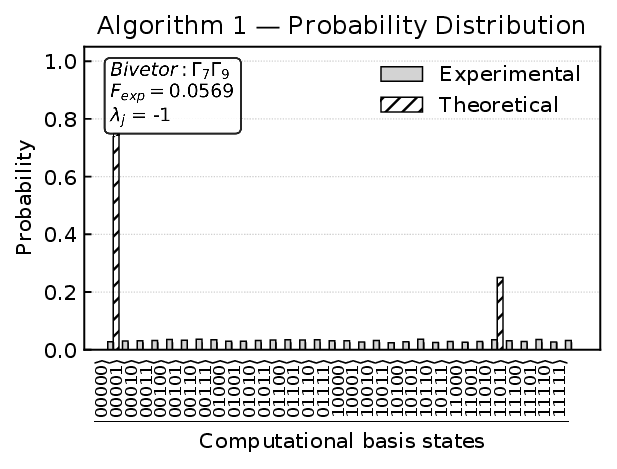}
	}
	\subfloat{
		\includegraphics[scale=0.7]{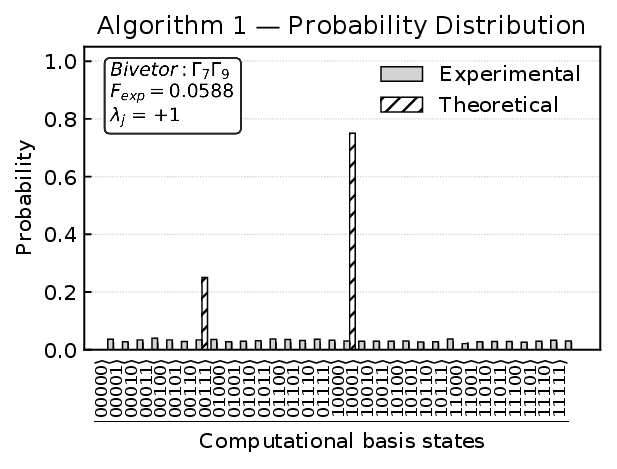}
	}\\
	\caption{Comparative probability distributions for Algorithm \ref{alg.:Algoritmo1} with 2, 3, 4, and 5 qubits. Each panel corresponds to a different classification operator. The histograms show the theoretical and experimental probabilities of the computational basis states after applying the operator O.}
	\label{fig:comparativo_hist_ACE_prob_alg1}
\end{figure}

The second algorithm is similar to Grover's algorithm, but here we consider a non-uniform initial distribution. In this context, we have a “bad” subspace, represented by $\ket{\alpha}$, and a “good” subspace, represented by $\ket{\beta}$, which contains the solution to the problem. The Grover operator is given by $(2\ket{\psi}\bra{\psi} - I)O$ with
\begin{eqnarray}
	\ket{\psi}
	=
	\cos(\theta)\ket{\alpha}
	+
	\sin(\theta)\ket{\beta}.
\end{eqnarray}

Here, unlike the standard Grover algorithm, we have $\theta$ corresponding to a non-uniform initial distribution, and the initial state is implemented through $R = \exp(\theta \Gamma_i \Gamma_j)$, as in Algorithm 1. Thus, for $O = \Gamma_j$, we obtain

\begin{equation}
	\begin{split}
		\ket{\psi'}
		&=
		\left(
		2\ket{\psi}\bra{\psi}
		-
		I
		\right)
		O \ket{\psi}
		\\
		&=
		\left(
		2\ket{\psi}\bra{\psi}
		-
		I
		\right)
		\left(
		\cos(\theta)\ket{\alpha}
		-
		\sin(\theta)\ket{\beta}
		\right)
		\\
		&=
		\cos(3\theta)\ket{\alpha}
		+
		\sin(3\theta)\ket{\beta}.
	\end{split}
\end{equation}

After $k$ iterations, we have
\begin{eqnarray}
	\vert \psi \rangle =\cos[(2k+1)\theta)] \vert \alpha \rangle + \sin[(2k+1)\theta)] \vert \beta \rangle
\end{eqnarray}

Therefore, the probability of obtaining the solution after $k$ iterations is
\begin{eqnarray}
	P_{sol}(k)= \vert \langle \beta \vert \psi_k \rangle \vert^{2}=\sin^{2}[(2k+1)\theta)]
\end{eqnarray}

so that $k = \left\lfloor \frac{\pi}{4\theta} - \frac{1}{2} \right\rceil$. We then obtain the following algorithm:

\begin{algorithm}[H]
	\caption{Quantum search with a non-uniform initial distribution}
	\label{alg:Algoritmo2}
	\begin{algorithmic}[1]
		\STATE \textbf{Input:} Quantum state
		\[
		\ket{\psi}
		=
		\cos\theta \ket{\alpha}
		+
		\sin\theta \ket{\beta}
		\]
		\STATE \textbf{Oracle:} Apply $\Gamma_j$
		\STATE \textbf{Diffusion:} Apply the operator
		\[
		D = (2\ket{\psi}\bra{\psi} - I)O
		\]
		\STATE \textbf{Iterations:} Repeat steps 2 and 3 $k = \left\lfloor \frac{\pi}{4\theta} - \frac{1}{2} \right\rceil$ times
		\STATE \textbf{Measurement:} Perform a measurement in the computational basis
	\end{algorithmic}
\end{algorithm}

In Figure \ref{fig:comparativo_hist_ACE_prob_alg2}, the four histograms obtained for Algorithm~2 illustrate the application of the proposed quantum search protocol to a non-uniform initial distribution based on spinorial representations. In all cases, the probability distributions are strongly concentrated around the marked state, indicating that the algorithm successfully amplifies the probability of the target element while suppressing the states that are not solutions.

\begin{figure}[H]
	\centering
	\subfloat{
		\includegraphics[scale=0.7]{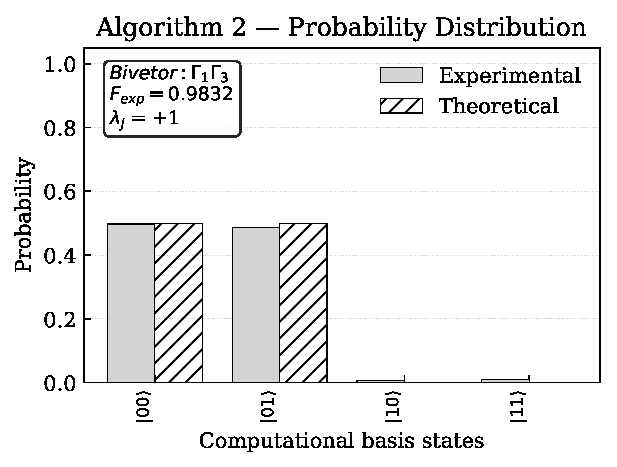}
	}
	\subfloat{
		\includegraphics[scale=0.7]{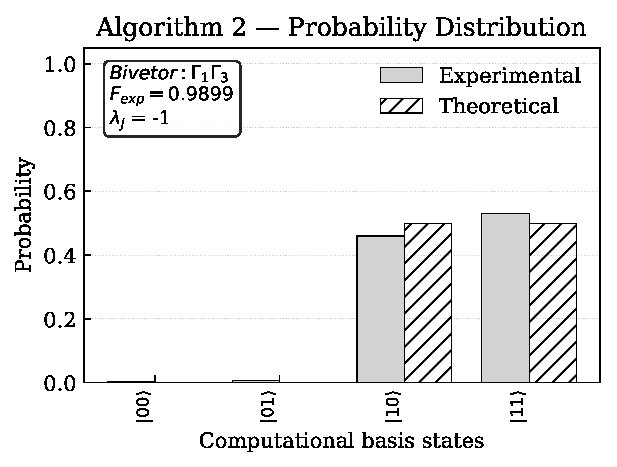}
	} \\
	\subfloat{
		\includegraphics[scale=0.7]{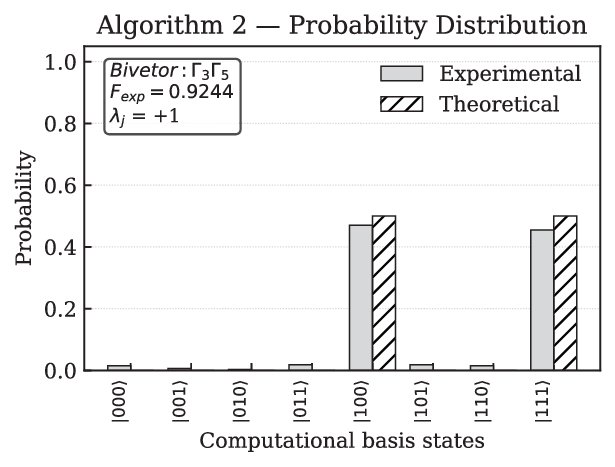}
	}
	\subfloat{
		\includegraphics[scale=0.7]{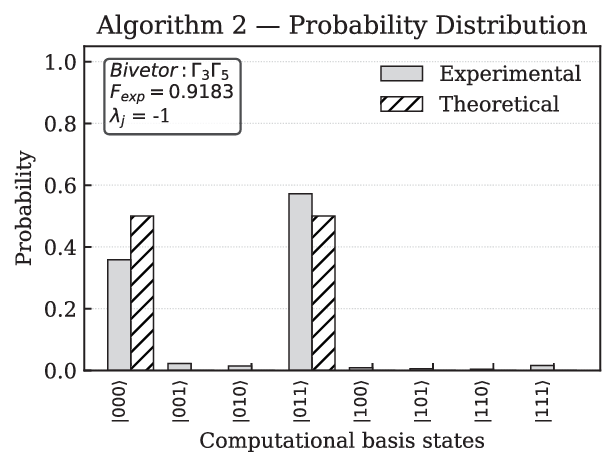}
	} \\
	\subfloat{
		\includegraphics[scale=0.7]{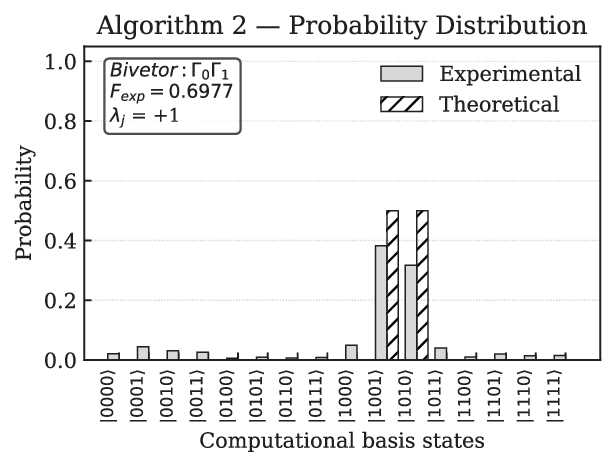}
	}
	\subfloat{
		\includegraphics[scale=0.7]{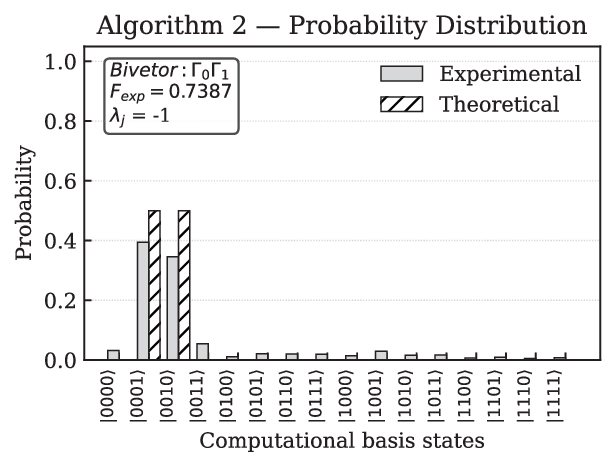}
	} \\
	\subfloat{
		\includegraphics[scale=0.7]{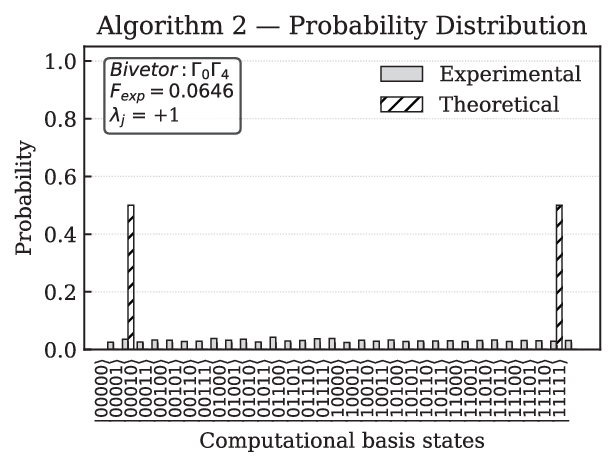}
	}
	\subfloat{
		\includegraphics[scale=0.7]{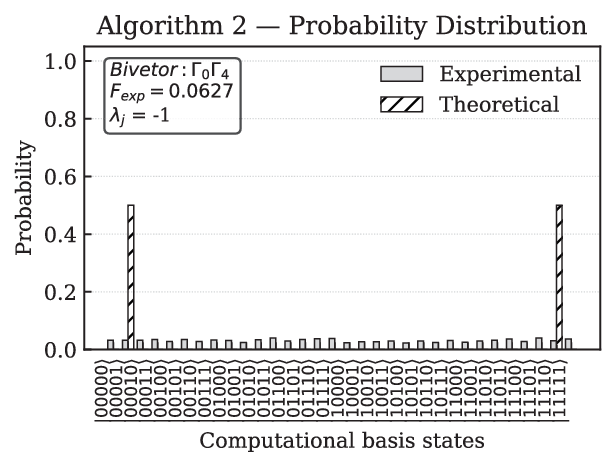}
	}
	
	\caption{Probability distributions of Algorithm \ref{alg:Algoritmo2} for two 2,3,4 and 5 qubits. One observes amplitude amplification in the solution subspace and suppression by destructive interference in the orthogonal complement. Solid bars represent experimental data and hatched bars represent theoretical predictions, showing the high fidelity of Clifford transformations in the decomposition of the state space.}
	\label{fig:comparativo_hist_ACE_prob_alg2}
\end{figure}

The decrease in experimental fidelity as the number of qubits increases, observed in Algorithms 1 and 2, is an intrinsic phenomenon of operation on NISQ devices and can be decomposed into structural and dynamical factors of the hardware. The implementation of the rotation operator requires a decomposition into universal gates whose circuit depth grows with the dimensionality of the spinorial representation. On \texttt{ibm\_torino}, this depth requires the insertion of additional SWAP gates to perform operations between non-adjacent qubits, which increases the total number of logical gates and consequently accumulates systematic and stochastic gate errors. Furthermore, the increase in total execution time relative to the coherence times of the device favors depolarization processes, which explains the dispersion observed in the probability distributions for larger-scale systems.

}

\section{Conclusions and Perspectives}

In this work, we developed an algebraic formulation based on Clifford algebras and their spinorial representations, which provides a unified language for the construction of quantum algorithms. Within this formalism, we proposed both classification and search algorithms whose functioning is entirely based on algebraic operations on spinorial states. One of the main advantages of this approach is that it avoids ad hoc constructions and instead exploits intrinsic algebraic properties, such as orthogonality and anticommutation relations, arising in Clifford algebras.

With respect to the classification algorithm, our proposal allows an arbitrary number of classes. The algorithm is conceptually simple and relies on the construction of mutually orthogonal quantum states, which are obtained from spinorial representations. This orthogonality ensures unambiguous discrimination between classes upon measurement, while the tensor structure guarantees systematic generalization to higher-dimensional systems.

Concerning quantum search, we analyzed a scenario involving non-uniform initial distribution in which the stored quantum states admit a Clifford encoding, also derived from spinorial representations. In this scenario, the oracle implementation becomes particularly simple, as it can be realized using a single generator of the Clifford algebra $Cl(2n)$. This represents a significant simplification when compared to more general oracle constructions, and highlights the computational advantages of encoding information directly in algebraic structures.

In a closely related approach, we may also show that a quantum search algorithm based on Hamiltonian simulation can be formulated within the same algebraic structure. By employing spinorial representations and Hamiltonians constructed from generators of the Clifford algebra, the search process can be interpreted as a controlled algebraic rotation in Hilbert space. In this context, the objective is to drive the system from an initial state $\ket{\psi}$ to a target state of the form $\Gamma_i\ket{\psi}$. Choosing the Hamiltonian as $H = -i\theta \Gamma_i\Gamma_j$, the time evolution yields $\ket{\psi'} = \exp(-itH)\ket{\psi} = \exp(it\theta \Gamma_i\Gamma_j)\left( \cos(\theta)\ket{\Gamma_j} - \sin(\theta)\Gamma_i\ket{\Gamma_j} \right) = \cos\left[\theta(1+t)\right]\ket{\Gamma_j} + \sin\left[\theta(1+t)\right]\Gamma_i\ket{\Gamma_j}$. An observation at the specific time $t = \left(\pi / 2\theta - 1\right)$ then results in the state $\Gamma_i\ket{\Gamma_j}$.

More broadly, these results suggest that Clifford algebras and spinorial methods constitute a promising formulation for the design of quantum algorithms, potentially opening new connections between quantum computation theory, Clifford algebras and spinors.

As perspectives, we intend to simulate the proposed classification and search algorithms for systems involving more than five qubits on better-calibrated quantum devices, with the goal of analyzing their scalability, robustness against noise, and potential advantages over standard quantum algorithms. 

\section{Appendix}
We review some basic facts about representations of Clifford algebras and spinorial representations of $SO(n)$. We follow the references \cite{Vaz, Sat, Gilbert}.
\begin{definition}
Let $V = \mathbb{R}^{2n}$ be a real vector space endowed with a symmetric bilinear inner product
\begin{eqnarray}
(\cdot,\cdot) : V \times V \to \mathbb{R}.
\end{eqnarray}
The tensor algebra $T(V)$ of $V$ is
\begin{eqnarray}
T(V) = \mathbb{R} \oplus V \oplus (V \otimes V) \oplus (V \otimes V \otimes V) \oplus \cdots,
\end{eqnarray}
with multiplication defined by the tensor product. Let $I$ be the two-sided ideal of $T(V)$ generated by the elements
\begin{eqnarray}
v \otimes v - (v,v)\, \mathbf{1}, \quad \forall v \in V.
\end{eqnarray}
The Clifford algebra associated with $(V, (\cdot,\cdot))$ is the quotient
\begin{eqnarray}
Cl(V, (\cdot,\cdot)) = T(V)/I.
\end{eqnarray}
For an orthonormal basis $\{e_i\}$ of $V$, the generators $\Gamma_i = \Gamma(e_i)$ satisfy
\begin{eqnarray}
\Gamma_i \Gamma_j + \Gamma_j \Gamma_i = 2 \, \delta_{ij} \, \mathbf{1}.
\end{eqnarray}
\end{definition}
\begin{definition}
Let $Cl^+(V, (\cdot,\cdot))$ be the even subalgebra of $Cl(V, (\cdot,\cdot))$, formed by products of an even number of unit vectors.
The Spin group is defined as
\begin{eqnarray}
\mathrm{Spin}(2n) = \{ R = v_1 v_2 \cdots v_{2k} \in Cl^+(V) \mid \|v_i\| = 1 \}.
\end{eqnarray}
The natural action of $\mathrm{Spin}(2n)$ on $V$ is
\begin{eqnarray}
\Pi(R)(v) = R v R^{-1}, \quad R \in \mathrm{Spin}(2n), \ v \in V,
\end{eqnarray}
defining a group homomorphism
\begin{eqnarray}
\Pi: \mathrm{Spin}(2n) \longrightarrow SO(2n),
\end{eqnarray}
with kernel $\{\pm 1\}$, so that $\mathrm{Spin}(2n)$ is a double covering of $SO(2n)$.
\end{definition}
\begin{definition}
Let $S$ be an irreducible left module over $Cl(V,(\cdot,\cdot))$, called the spinor space,
and let $GL(S)$ be the general linear group of invertible operators on $S$. The \textbf{spinorial representation} of $Spin(2n)$ is defined as the homomorphism
\begin{eqnarray}
\rho : \mathrm{Spin}(2n) \longrightarrow GL(S), \qquad \rho(R)\psi = R \psi,
\end{eqnarray}
where $R \in \mathrm{Spin}(2n)$ and $\psi \in S$.
Every element $R \in \mathrm{Spin}(2n)$ can be written as the exponential of a bivector:
\begin{eqnarray}
R = \exp\Bigl(\sum_{i<j} \theta_{ij} \, \Gamma_i \Gamma_j \Bigr),
\end{eqnarray}
and its action on spinors is
\begin{eqnarray}
\psi' = \rho(R)\psi = \exp\Bigl(\sum_{i<j} \theta_{ij} \, \Gamma_i \Gamma_j \Bigr)\psi.
\end{eqnarray}
\end{definition}
\begin{proposition}[Theorem 13.1 of reference \cite{Sat}]
Let $\Gamma_i$ satisfy the relations $\left\{\Gamma_i, \Gamma_j\right\} = 2\delta_{ij}$. Then the elements $\Gamma_{ij} = \frac{1}{2} \Gamma_i\Gamma_j$ generate the Lie algebra $\mathfrak{so}(2n)$.
\end{proposition}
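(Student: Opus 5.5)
Since $\Gamma_{ij}=\frac12\Gamma_i\Gamma_j$, the plan is to verify directly that the span of $\{\Gamma_{ij}\}_{i<j}$ is closed under commutators with exactly the structure constants of $\mathfrak{so}(2n)$. Then we exhibit the isomorphism with the standard basis $L_{ij}=E_{ij}-E_{ji}$. Everything rests on the Clifford relations $\{\Gamma_i,\Gamma_j\}=2\delta_{ij}$.

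\emph{Elementary properties.} First note that $\Gamma_{ij}=-\Gamma_{ji}$ for $i\neq j$, which follows from anticommutation, and that $\Gamma_{ii}=\frac12$ is central. The central term will be discarded, so the candidate algebra is the span of $\{\Gamma_{ij}\}_{i<j}$, of dimension $n(2n-1)=\dim\mathfrak{so}(2n)$. Linear independence follows because distinct products $\Gamma_i\Gamma_j$ are distinct basis monomials of $Cl(2n)$, equivalently of the faithful representation of Lemma~\ref{lem1}.

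\emph{Commutators with generators.} Next compute $[\Gamma_{ij},\Gamma_k]$ for $i\neq j$. Moving $\Gamma_k$ through $\Gamma_i\Gamma_j$ with $\Gamma_k\Gamma_i=-\Gamma_i\Gamma_k+2\delta_{ik}$ gives
\[
[\Gamma_{ij},\Gamma_k]=\delta_{jk}\Gamma_i-\delta_{ik}\Gamma_j .
\]
\emph{Commutators of bivectors.} Using the derivation property $[A,BC]=[A,B]C+B[A,C]$ with $A=\Gamma_{ij}$, $B=\Gamma_k$, $C=\Gamma_l$, we obtain
\[
[\Gamma_{ij},\Gamma_{kl}]=\delta_{jk}\Gamma_{il}-\delta_{ik}\Gamma_{jl}+\delta_{jl}\Gamma_{ki}-\delta_{il}\Gamma_{kj}.
\]
This is precisely the commutation relation satisfied by $L_{ij}$ in $\mathfrak{so}(2n)$. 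The map $L_{ij}\mapsto\Gamma_{ij}$ therefore extends linearly to a Lie algebra homomorphism. By the independence and dimension count above it is injective, so it is an isomorphism onto the real span of the $\Gamma_{ij}$.

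The main obstacle is bookkeeping of the degenerate index cases, namely $k$ or $l$ coinciding with $i$ or $j$. There the intermediate expressions contain terms $\Gamma_{ii}$, which are scalars, and one must check that they cancel rather than leaving a central contribution. I would handle this by treating $i,j,k,l$ all distinct (commutator zero) separately from the case of exactly one shared index. The case where both indices are shared is trivial, because then $\Gamma_{kl}=\pm\Gamma_{ij}$.

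A consistency check is that $\exp(\theta\Gamma_{ij})$ acts on vectors via $v\mapsto RvR^{-1}$ as a rotation by angle $\theta$ in the $(e_i,e_j)$ plane, matching $\Pi$ in the Appendix.
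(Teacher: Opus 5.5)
Your proof is correct and follows the same route as the paper: compute $[\Gamma_{ij},\Gamma_{kl}]$ directly from the anticommutation relations and match the result against the $\mathfrak{so}(2n)$ structure constants. Passing through $[\Gamma_{ij},\Gamma_k]=\delta_{jk}\Gamma_i-\delta_{ik}\Gamma_j$ and the Leibniz rule is just a cleaner way to do the paper's reordering.

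Your version is also more complete than the paper's, which never checks linear independence, the identification with $L_{ij}=E_{ij}-E_{ji}$, or the cancellation of the scalar $\Gamma_{ii}$ terms. Your bracket is the correct one, whereas the paper's final term $\delta_{rj}\Gamma_{si}$ has a sign slip and should read $\delta_{rj}\Gamma_{is}$.
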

\begin{proof}
We must compute the commutator $\left[\Gamma_{ij}, \Gamma_{rs}\right]$.
\begin{equation}
    \begin{split}
        \left[\Gamma_{ij}, \Gamma_{rs}\right] &= \frac{1}{4} \left[\Gamma_{i}\Gamma_j, \Gamma_{r}\Gamma_s\right] \\
        &= \frac{1}{4} \left(\Gamma_i\Gamma_j \Gamma_r\Gamma_s - \Gamma_r\Gamma_s\Gamma_i\Gamma_j\right).
    \end{split}
\end{equation}
Using the relation $\Gamma_s\Gamma_i = 2\delta_{is} - \Gamma_i\Gamma_s$, we obtain
\begin{equation}
    \Gamma_r\Gamma_s\Gamma_i\Gamma_j = 2\delta_{is}\Gamma_r\Gamma_j - 2\delta_{ir}\Gamma_s\Gamma_j + \Gamma_s\Gamma_i\Gamma_r\Gamma_j,
\end{equation}
and
\begin{equation}
    \Gamma_s\Gamma_i\Gamma_r\Gamma_j = 2\delta_{is}\Gamma_r\Gamma_j - 2\delta_{ir}\Gamma_s\Gamma_j + 2\delta_{sj}\Gamma_i\Gamma_r -  \Gamma_i\Gamma_j\Gamma_s\Gamma_r.
\end{equation}
Moreover,
\begin{equation}
    \Gamma_i\Gamma_s\Gamma_j\Gamma_r = 2\delta_{sj}\Gamma_i\Gamma_r - \Gamma_i\Gamma_j\Gamma_s\Gamma_r.
\end{equation}
Therefore,
\begin{equation}
    -\Gamma_r\Gamma_i\Gamma_s\Gamma_j = 2\delta_{is}\Gamma_r\Gamma_j - 2\delta_{ir}\Gamma_s\Gamma_j + 2\delta_{sj}\Gamma_i\Gamma_r -  \Gamma_i\Gamma_j\Gamma_s\Gamma_r,
\end{equation}
and
\begin{equation}
    -\Gamma_i\Gamma_j\Gamma_s\Gamma_r = 2\delta_{rj}\Gamma_i\Gamma_s + \Gamma_i\Gamma_j\Gamma_r\Gamma_s.
\end{equation}
Consequently,
\begin{equation}
    \begin{split}
        \left[\Gamma_{ij}, \Gamma_{rs}\right] &= \frac{1}{4} \left(
        \Gamma_i\Gamma_j\Gamma_r\Gamma_s - 2\delta_{is}\Gamma_r\Gamma_j + 2\delta_{ir}\Gamma_s\Gamma_j - 2\delta_{sj}\Gamma_i\Gamma_r - 2\delta_{rj}\Gamma_r\Gamma_s - \Gamma_i\Gamma_j\Gamma_r\Gamma_s
        \right) \\
        &= \delta_{is}\Gamma_{jr} + \delta_{ir}\Gamma_{sj} + \delta_{sj}\Gamma_{ri} + \delta_{rj}\Gamma_{si},
    \end{split}
\end{equation}
which corresponds to the Lie algebra $\mathfrak{so}(2n)$.
\end{proof}
Notice that there is an isomorphism $\mathfrak{spin}(2n)\simeq \mathfrak{so}(2n)$. We now prove the following proposition.
\begin{proposition}
Let $\Gamma_{i}$ and $\Gamma_{j}$ be generators of a Clifford algebra defined in the eq.(\ref{eq_def_gerador_clifford}). The operator
\begin{eqnarray}
R_{i,j}(\theta) = \exp\left( \theta \Gamma_{i}\Gamma_{j} \right)
\end{eqnarray}
can be written as
\begin{equation}
    R_{i, j}(\theta) = \cos(\theta)I + \sin(\theta)\Gamma_{i}\Gamma_{j}.
\end{equation}
\end{proposition}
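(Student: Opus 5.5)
Note first that the claim needs $i\neq j$: if $i=j$, then $\Gamma_i\Gamma_i=I$ and the exponential is $e^{\theta}I$, not a rotation. So I will assume $i\neq j$, which is the case used everywhere in the paper (Lemma~\ref{lem1} and the constructions of $\ket{\psi}$).

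The approach mirrors Euler's formula $e^{i\theta}=\cos\theta+i\sin\theta$. The bivector $B:=\Gamma_i\Gamma_j$ plays the role of the imaginary unit. I will expand the exponential as a power series and reduce the powers of $B$ using only the anticommutation relations $\{\Gamma_i,\Gamma_j\}=2\delta_{ij}$, which the generators in Eq.~(\ref{eq_def_gerador_clifford}) satisfy.

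The key step is to compute $B^2$. Anticommuting $\Gamma_j$ past $\Gamma_i$ in the middle gives
\begin{equation*}
B^2=\Gamma_i\Gamma_j\Gamma_i\Gamma_j=-\Gamma_i\Gamma_i\Gamma_j\Gamma_j=-\Gamma_i^2\Gamma_j^2=-I,
\end{equation*}
since $\Gamma_k^2=I$ for every generator. By induction, the powers of $B$ are then fixed by parity:
\begin{equation*}
B^{2m}=(-1)^m I, \qquad B^{2m+1}=(-1)^m B.
\end{equation*}

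Next I expand $\exp(\theta B)=\sum_{k\ge0}\theta^k B^k/k!$, which converges absolutely because these are finite-dimensional matrices. I split the sum into even and odd $k$ and substitute the power formulas:
\begin{equation*}
\exp(\theta B)=\sum_{m\ge0}\frac{(-1)^m\theta^{2m}}{(2m)!}\,I+\sum_{m\ge0}\frac{(-1)^m\theta^{2m+1}}{(2m+1)!}\,B.
\end{equation*}
The two scalar series are the Taylor series of $\cos\theta$ and $\sin\theta$, so $\exp(\theta B)=\cos\theta\,I+\sin\theta\,\Gamma_i\Gamma_j$, which is the claimed formula.

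An alternative check is through an ODE. Set $F(\theta)=\cos\theta\, I+\sin\theta\, B$. Then $F'(\theta)=BF(\theta)$ because $B^2=-I$, and $F(0)=I$. By uniqueness of solutions to linear ODEs, $F(\theta)=\exp(\theta B)$.

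The only step that needs care is the computation of $B^2=-I$, specifically the sign from the single anticommutation and the requirement $i\neq j$. The rest is a routine regrouping of an absolutely convergent series.
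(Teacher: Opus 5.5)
Your proof is correct and follows the paper's argument: expand the exponential as a power series, reduce the powers of $\Gamma_i\Gamma_j$ using $(\Gamma_i\Gamma_j)^2=-I$, and regroup into the cosine and sine series. You also spell out two points the paper leaves implicit. The first is the single anticommutation that gives $(\Gamma_i\Gamma_j)^2=-I$. The second is the hypothesis $i\neq j$, which the statement genuinely needs, since for $i=j$ the exponential is $e^{\theta}I$.
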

\begin{proof}
From the power series expansion of the exponential, we have
\begin{equation}
    \begin{split}
        R_{i, j}(\theta) &= \exp\left( \theta \Gamma_{i}\Gamma_{j} \right)
        = \sum_{k=0}^{\infty}  \frac{\left( \theta \Gamma_{i}\Gamma_{j}\right)^k}{k!} \\
        &= \sum_{k=0}^{\infty} \frac{\theta^{2k}}{(2k)!} \left( \Gamma_{i}\Gamma_{j} \right)^{2k}
        + \sum_{k=0}^{\infty} \frac{\theta^{2k+1}}{(2k+1)!} \left( \Gamma_{i}\Gamma_{j} \right)^{2k+1}.
    \end{split}
\end{equation}
From Equation~\ref{eq_def_gerador_clifford}, we obtain
\begin{equation}
    \left( \Gamma_{i}\Gamma_{j} \right)^{2k} = (-1)^k I, \qquad
    \left(\Gamma_{i}\Gamma_{j} \right)^{2k+1} = (-1)^k \Gamma_{i}\Gamma_{j}.
\end{equation}
Therefore,
\begin{equation}
    R_{i,j}(\theta)
    = \sum_{k=0}^{\infty} \frac{(-1)^k \theta^{2k}}{(2k)!} I
    + \sum_{k=0}^{\infty} \frac{(-1)^k \theta^{2k+1}}{(2k+1)!} \Gamma_{i}\Gamma_{j}
    = \cos(\theta)I + \sin(\theta)\Gamma_{i}\Gamma_{j}.
\end{equation}
\end{proof}
The Pauli matrices are given by
\begin{eqnarray}
\sigma_1 =
\begin{pmatrix}
0 & 1 \\
1 & 0
\end{pmatrix},
\quad
\sigma_2 =
\begin{pmatrix}
0 & -i \\
i & 0
\end{pmatrix},
\quad
\sigma_3 =
\begin{pmatrix}
1 & 0 \\
0 & -1
\end{pmatrix}.
\label{eq:pauli_matrices}
\end{eqnarray}

\begin{theorem}
Let $\sigma_1,\sigma_2,\sigma_3$ be the Pauli matrices and $I$ the $2\times2$ identity matrix. For each $j=1,\dots,n$, define
\begin{eqnarray}
\Gamma_j = \sigma_3^{\otimes (j-1)}\otimes\sigma_1\otimes I^{\otimes (n-j)},
\qquad
\Gamma_{n+j} = \sigma_3^{\otimes (j-1)}\otimes\sigma_2\otimes I^{\otimes (n-j)}.
\end{eqnarray}
Then the $2n$ matrices $\{\Gamma_a\}_{a=1}^{2n}$ satisfy the Clifford relations
\begin{eqnarray}
\{\Gamma_a,\Gamma_b\} = \Gamma_a\Gamma_b + \Gamma_b\Gamma_a = 2\delta_{ab}\, I_{2^n},
\end{eqnarray}
that is, they provide a representation of the Clifford algebra $\mathrm{Cl}(2n)$ on $\mathbb{C}^{2^n}$.
\end{theorem}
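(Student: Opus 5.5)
The plan is a direct computation with the mixed-product rule $(A_1\otimes\cdots\otimes A_n)(B_1\otimes\cdots\otimes B_n)=A_1B_1\otimes\cdots\otimes A_nB_n$. We use only the single-qubit Pauli relations $\sigma_a^2=I$ and $\sigma_a\sigma_b=-\sigma_b\sigma_a$ for $a\neq b$. For each $a\in\{1,\dots,2n\}$ we attach a \emph{site} $p(a)\in\{1,\dots,n\}$: $p(j)=p(n+j)=j$. Every $\Gamma_a$ has the same pattern, namely $\sigma_3$ on the sites before $p(a)$, a Pauli matrix $\tau_a\in\{\sigma_1,\sigma_2\}$ on site $p(a)$, and $I$ on the sites after it.

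First we check $\Gamma_a^2=I_{2^n}$. The product factorizes site by site into $\sigma_3^2$, $\tau_a^2$ and $I^2$, each equal to $I$. This gives the diagonal case $\{\Gamma_a,\Gamma_a\}=2I_{2^n}$.

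For $a\neq b$ we show $\Gamma_a\Gamma_b=-\Gamma_b\Gamma_a$. We compare the factors of $\Gamma_a\Gamma_b$ and $\Gamma_b\Gamma_a$ on each site. At any site where at least one of the two factors is $I$, or where both factors are equal, they commute. So the overall sign is $(-1)^m$, where $m$ counts the sites at which the two factors are distinct anticommuting Paulis. There are two cases.

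\textbf{Case 1: same site}, $p(a)=p(b)=j$. This means $\{a,b\}=\{j,n+j\}$. On sites before $j$ both factors are $\sigma_3$, which commute. On site $j$ we have $\sigma_1$ versus $\sigma_2$, which anticommute. On sites after $j$ both factors are $I$. Hence $m=1$.

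\textbf{Case 2: different sites}, say $p(a)<p(b)$. On sites before $p(a)$ both factors are $\sigma_3$, which commute. On site $p(a)$ we have $\tau_a$ against $\sigma_3$; since $\tau_a\in\{\sigma_1,\sigma_2\}$, they anticommute. On the sites strictly between $p(a)$ and $p(b)$ we have $I$ against $\sigma_3$, which commute. On site $p(b)$ we have $I$ against $\tau_b$, which commute. Hence $m=1$.

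In both cases $\Gamma_a\Gamma_b=-\Gamma_b\Gamma_a$. Combined with the diagonal case, this gives $\{\Gamma_a,\Gamma_b\}=2\delta_{ab}I_{2^n}$.

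The only delicate step is the sign bookkeeping in Case 2. The $\sigma_3$ string (the Jordan--Wigner tail) is exactly what produces a single anticommuting site there. I would state the factor-by-factor sign rule explicitly as a small lemma before the case split: if $A_k,B_k$ satisfy $A_kB_k=\epsilon_kB_kA_k$ with $\epsilon_k=\pm1$, then $(\bigotimes_k A_k)(\bigotimes_k B_k)=\bigl(\prod_k\epsilon_k\bigr)(\bigotimes_k B_k)(\bigotimes_k A_k)$. Finally, the map $e_a\mapsto\Gamma_a$ extends to an algebra homomorphism $Cl(2n)\to\mathrm{End}(\mathbb{C}^{2^n})$ by the universal property of $T(V)/I$ from the Appendix definition. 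This works because $v\mapsto\sum_a v_a\Gamma_a$ squares to $(v,v)I$, as follows from the relations just proved.
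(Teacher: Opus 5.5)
Your proof is correct and follows essentially the same route as the paper. Both use the mixed-product rule site by site, with the cases $a=b$, same block $\{j,n+j\}$, and distinct blocks, and in each off-diagonal case exactly one tensor position anticommutes. The paper treats the same-block case via $\sigma_1\sigma_2=i\sigma_3=-\sigma_2\sigma_1$, which is the same observation. Your explicit sign lemma and the universal-property step extending $e_a\mapsto\Gamma_a$ to an algebra homomorphism are small additions that the paper leaves implicit.
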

\begin{proof}
We use the properties of the Pauli matrices
\begin{eqnarray}
\{\sigma_\alpha,\sigma_\beta\}=2\delta_{\alpha\beta}I,\qquad \sigma_\alpha^2=I,
\end{eqnarray}
and the tensor product property
\begin{eqnarray}
(A_1\otimes\cdots\otimes A_n)(B_1\otimes\cdots\otimes B_n)
=(A_1B_1)\otimes\cdots\otimes(A_nB_n).
\end{eqnarray}
We verify the Clifford relations by considering three cases.
\medskip

(i) Case $a=b$. \\
If $a=b$, then in $\Gamma_a^2$ one position contains either $\sigma_1^2$ or $\sigma_2^2$, and all other positions contain either $\sigma_3^2$ or $I^2$. Since all these squares equal the identity, it follows that $\Gamma_a^2=I_{2^n}$. Therefore,
\begin{eqnarray}
\{\Gamma_a,\Gamma_a\}=2I_{2^n}.
\end{eqnarray}

\medskip

(ii) Case $a\neq b$ belonging to the same block $j$ (for example, $a=\Gamma_j$, $b=\Gamma_{n+j}$). \\
In all tensor positions except $j$ the factors are identical; at position $j$ we have $\sigma_1$ and $\sigma_2$. Since
\begin{eqnarray}
\sigma_1\sigma_2=i\sigma_3,\qquad \sigma_2\sigma_1=-i\sigma_3,
\end{eqnarray}
we obtain
\begin{eqnarray}
\Gamma_j\Gamma_{n+j}=i\,T,\qquad \Gamma_{n+j}\Gamma_j=-i\,T,
\end{eqnarray}
for some common tensor factor $T$. Summing, we find $\Gamma_j\Gamma_{n+j}+\Gamma_{n+j}\Gamma_j=0$, hence $\{\Gamma_j,\Gamma_{n+j}\}=0$.

\medskip

(iii) Case $a$ and $b$ belonging to distinct blocks $j\neq k$. \\
Without loss of generality assume $j<k$. By construction, in all positions with index smaller than $j$ both tensor factors coincide; at position $j$ one tensor has $\sigma_{1}$ or $\sigma_{2}$ while the other has $\sigma_3$. Since $\sigma_3$ anticommutes with $\sigma_1$ and $\sigma_2$,
\begin{eqnarray}
\sigma_3\sigma_{\ell}=-\sigma_{\ell}\sigma_3\quad(\ell=1,2),
\end{eqnarray}
there is exactly one position where the local factors anticommute, while in all other positions they commute. Consequently, the products $\Gamma_a\Gamma_b$ and $\Gamma_b\Gamma_a$ differ by an overall minus sign, that is,
\begin{eqnarray}
\Gamma_a\Gamma_b = -\,\Gamma_b\Gamma_a,
\end{eqnarray}
and therefore $\{\Gamma_a,\Gamma_b\}=0$.
Combining cases (i)--(iii), we conclude that for any $a,b\in\{1,\dots,2n\}$,
\begin{eqnarray}
\{\Gamma_a,\Gamma_b\}=2\delta_{ab}\,I_{2^n},
\end{eqnarray}
which proves the theorem.
\end{proof}

\bigskip

\noindent\textbf{Remark.} To obtain a representation of the algebra $\mathrm{Cl}(2n+1)$, it suffices to add
\begin{eqnarray}
\Gamma_{2n+1}=\sigma_3^{\otimes n},
\end{eqnarray}
which satisfies $\Gamma_{2n+1}^2=I_{2^n}$ and anticommutes with all the $\Gamma_a$ defined above.
A multivector of grade $k$ in $\mathrm{Cl}(V,(\cdot,\cdot))$ is an element of the form
\begin{eqnarray}
A_k = v_1 v_2 \cdots v_k,
\end{eqnarray}
with $v_i \in V$, or a linear combination of such products. The Clifford algebra carries three canonical involutive operations, which we define explicitly.
The grade involution changes the sign of elements of odd degree:
\begin{eqnarray}
G(A_k) := (-1)^k A_k.
\end{eqnarray}
The reversion, denoted $R_v$, reverses the order of vectors in a product:
\begin{eqnarray}
R_v(v_1 v_2 \cdots v_k) := v_k v_{k-1} \cdots v_1.
\end{eqnarray}
When expressed back in the original order of vectors, it produces a sign factor
\begin{eqnarray}
R_v(A_k) = (-1)^{k(k-1)/2} \, A_k.
\end{eqnarray}
The Clifford conjugation, denoted by the superscript $C$, is defined as the composition of grade involution and reversion:
\begin{eqnarray}
A_k^C := R_v(G(A_k)) = (-1)^k \, R_v(A_k) = (-1)^{k(k+1)/2} A_k.
\end{eqnarray}
Let $\mathrm{Cl}(V,(\cdot,\cdot)) \otimes \mathbb{C}$ be complexification of Clifford algebra $\mathrm{Cl}(V,(\cdot,\cdot))$. After complexification, we also introduce the usual complex conjugation $\overline{\phantom{A}}$.  
The abstract adjoint is then defined by
\begin{eqnarray}
A^\dagger := \overline{A^C},
\end{eqnarray}
Assume now that $\dim V = 2n$ and fix an oriented orthonormal basis $(e_1,\ldots,e_{2n})$.  
Define the volume element
\begin{eqnarray}
\omega := e_1 e_2 \cdots e_{2n},
\end{eqnarray}
and the chirality operator
\begin{eqnarray}
\Gamma_* := i^n \, \omega.
\end{eqnarray}
Using the above involutive structure, one finds
\begin{eqnarray}
\Gamma_*^\dagger
= \overline{G(\Gamma_*)} 
= (-i)^n (-1)^{n(2n-1)} \, \omega
= i^n \, \omega
= \Gamma_*.
\end{eqnarray}
Therefore, the chirality operator is Hermitian in all even dimensions and satisfies
\begin{eqnarray}
\Gamma_*^2 = 1.
\end{eqnarray}
Consequently, the spinor space admits a canonical decomposition
\begin{eqnarray}
S = S_+ \oplus S_-, \qquad
S_\pm = \{ \psi \in S \mid \Gamma_* \psi = \pm \psi \},
\end{eqnarray}
with orthogonal projectors
\begin{eqnarray}
P_\pm := \frac{1}{2} (1 \pm \Gamma_*).
\end{eqnarray}
This provides representation-independent definition of chirality entirely within the structure of Clifford algebras.


\end{document}